\documentclass[12pt, reqno]{amsart}

\usepackage{graphics, stackrel}
\usepackage{amsmath, amssymb, amsthm, amsfonts}
\usepackage{graphicx}
\usepackage{verbatim}
\usepackage{amsfonts}
\usepackage{natbib}
\usepackage{enumitem}
\usepackage{epigraph}
\usepackage{booktabs}

\usepackage[xcharter]{newtxmath}
\usepackage{caption}
\usepackage{subcaption}

\usepackage{fancyvrb}
\usepackage[dvipsnames,svgnames,table]{xcolor}
\usepackage{mdwlist}

\usepackage[breaklinks=true, colorlinks=true, citecolor=Blue, linkcolor=blue]{hyperref}
\newcommand\myshade{85}
\colorlet{mylinkcolor}{violet}
\colorlet{mycitecolor}{Brown}
\colorlet{myurlcolor}{violet}
\hypersetup{
  linkcolor  = mylinkcolor!\myshade!black,
  citecolor  = mycitecolor!\myshade!black,
  urlcolor   = myurlcolor!\myshade!black,
  colorlinks = true,
}

\usepackage{enumitem}
\setlist[enumerate]{itemsep=2pt,topsep=3pt}
\setlist[itemize]{itemsep=2pt,topsep=3pt}
\setlist[enumerate,1]{label={\upshape (\roman*)}}

\usepackage{bbm}

\usepackage[left=1.25in, 
            right=1.25in, 
            top=1.0in, 
            bottom=1.15in, 
            includehead, includefoot]{geometry}

\newcommand{\navy}[1]{\textit{\textcolor{Blue}{#1}}}

\renewcommand{\leq}{\leqslant}
\renewcommand{\geq}{\geqslant}

\usepackage[ruled, linesnumbered]{algorithm2e}

\DeclareMathOperator*{\argmin}{arg\,min}
\DeclareMathOperator*{\argmax}{arg\,max}

\newcommand{\setntn}[2]{ \{ #1 : #2 \} }

\newcommand{\1}{\mathbbm 1}

\newcommand{\asim}{\stackrel { a } {\sim} }

\newcommand*\diff{\mathop{}\!\mathrm{d}}

\newcommand{\tmax}{T_\vartriangle}
\newcommand{\tmin}{T_\triangledown}
\newcommand{\vmax}{v_\vartriangle}
\newcommand{\vmin}{v_\triangledown}
\newcommand{\Hmax}{H_\vartriangle}
\newcommand{\Hmin}{H_\triangledown}

\newcommand{\htmax}{\hat{T}_\vartriangle}
\newcommand{\htmin}{\hat{T}_\triangledown}
\newcommand{\hvmax}{\hat{v}_\vartriangle}
\newcommand{\hvmin}{\hat{v}_\triangledown}

\newcommand{\aA}{\mathscr A}

\newcommand{\eE}{\mathcal E}

\newcommand{\RR}{\mathbbm R}

\newcommand{\NN}{\mathbbm N}

\newcommand{\EE}{\mathbbm E}

\newcommand{\Asf}{\mathsf A}

\newcommand{\Gsf}{\mathsf G}

\newcommand{\Xsf}{\mathsf X}

\newcommand{\Ysf}{\mathsf Y}
\newcommand{\Zsf}{\mathsf Z}

\renewcommand{\phi}{\varphi}
\renewcommand{\epsilon}{\varepsilon}

\theoremstyle{plain}
\newtheorem{theorem}{Theorem}[section]

\newtheorem{lemma}[theorem]{Lemma}
\newtheorem{proposition}[theorem]{Proposition}

\theoremstyle{definition}

\newtheorem{example}{Example}[section]
\newtheorem{remark}{Remark}[section]

\begin{document}

\title{Isomorphic Dynamic Programs}

\author{John Stachurski}
\address{National Graduate Institute for Policy Studies}
\email{john.stachurski@gmail.com}

\author{Junnan Zhang}
\address{Center for Macroeconomic Research and Wang Yanan Institute for
  Studies in Economics at Xiamen University, China}
\email{zhangjunnan1224@gmail.com}

\thanks{The authors thank Yuchao Li, Shu Hu, and Jingni Yang for valuable
comments and suggestions.  Financial support from Schmidt Futures is gratefully acknowledged.
The second author acknowledges support from NSFC Project 72303190.}

\begin{abstract}
    We study relationships between dynamic programs by applying conjugacy
    methods from dynamical systems theory.  When two dynamic programs are
    connected by an order isomorphism, we show that optimality properties
    transmit from one formulation to the other.  We apply these results to
    Epstein--Zin preferences with time preference shocks, obtaining a sharp
    characterization of when optimality holds.  We also show that
    multiplicative Kreps--Porteus preferences and risk-sensitive preferences
    are isomorphic, so that well-known results for the latter carry over to
    the former.  Finally, we demonstrate how isomorphic transformations can
    improve the numerical accuracy of value function approximations, with
    gains of two orders of magnitude in a multisector real business cycle model.
\end{abstract}

\date{\today}

\maketitle

\section{Introduction}

Dynamic programming is a major field of optimization with applications across
a wide spectrum of scientific domains, including economics and finance (see,
e.g., \cite{bellman1957dynamic}, \cite{stokey1989recursive},
\cite{puterman2005markov}, \cite{hernandez2012discrete},
\cite{bauerle2011markov}, \cite{bertsekas2012dynamic}). In recent years,
variations of the standard model have flourished.  This is particularly true in
economics and finance, as applied researchers
extend dynamic models to get closer to the data. Some of these variations add
features to the standard framework, including Epstein--Zin preferences,
risk-sensitive preferences, adversarial agents, ambiguity, and hyperbolic
discounting.\footnote{See, for example, \cite{epstein1989risk},
  \cite{cagetti2002robustness}, \cite{hansen2011robustness},
  \cite{bauerle2018stochastic}, \cite{fedus2019hyperbolic},
  \cite{marinacci2019unique}, \cite{gao2021robust}, or \cite{de2022dynamic}.}
Other variations take an existing model and rearrange the structure and
timing. These variations include the Q-factor and exponential risk-sensitive
Q-factor formulations from Q-learning, and the expected value function models and
integrated value function models from structural estimation.\footnote{See,
  for example, \cite{kochenderfer2022algorithms}, \cite{rust1994structural},
  \cite{fei2021exponential}, or \cite{kristensen2021solving}.}

Given this proliferation of variations and modifications, one significant
issue is that each new formulation typically requires its own bespoke analysis
to establish desirable optimality properties (e.g., existence of optimal
policies, Bellman's principle of optimality, or convergence of Howard policy
iteration). In this paper, we address this issue from a new direction, by borrowing ideas from the
field of dynamical systems. Since each policy operator defines a
discrete-time dynamical system, tools from that field become directly
applicable. In particular, we draw on notions of conjugacy. In
dynamical systems and chaos theory, topological conjugacy is used to identify
families of dynamical systems that have ``equivalent'' or related dynamics
(see, e.g., \cite{smale1967differentiable} or \cite{arnold2012geometrical}).
These ideas have proved enormously useful for categorizing different types of
systems and allowing researchers to make deductions about a particular system
from their knowledge of other related systems.\footnote{For example,
  \cite{kennedy2008chaotic}, \cite{gardini2009forward}, and
  \cite{raines2012fixed} characterize chaotic dynamical systems in economic
  models utilizing the shift homeomorphism. In a similar vein,
  \cite{flynn2022macroeconomics} establish topological conjugacy between
  equilibrium economic dynamics and dynamical systems with known chaotic
  behaviors. Also see \cite{battaglini2021chaos} and
  \cite{deng2022continuous}.}

Specifically, we transfer these ideas to dynamic programming by studying
order-theoretic conjugacy relationships between the policy operators that
define a dynamic program.  We call two dynamic programs linked in this way
\emph{isomorphic}; under an order-reversing link, we call them
\emph{anti-isomorphic}.  Our main result shows that verifying this single
relationship is enough to transfer an entire package of optimality properties
from one program to the other: existence and uniqueness of the value
function, Bellman's principle of optimality, existence of optimal policies,
and convergence of Howard policy iteration.  Solving one program therefore
delivers a complete solution to the other, including recovery of the value
function and optimal policies.

The benefits of this approach are both theoretical and numerical. On the
theoretical side, dynamic programs that are not easily analyzed can often be
connected, via an isomorphism, to a program whose optimality properties are
already established or more readily verified; the isomorphism then allows
those properties to carry over to the original problem. We illustrate this in
two settings that are not covered by standard contraction-based methods.
First, for a dynamic program with Epstein--Zin utility and time preference
shocks, we obtain an exact characterization of when optimality holds. Second,
we show that multiplicative Kreps--Porteus preferences are isomorphic to
risk-sensitive preferences, so that the well-developed contraction theory for
the latter yields optimality results for the former. On the numerical side,
within an isomorphism class one is free to work with whichever representative
is best suited to computation. Exploiting this flexibility in the multisector
real business cycle model of \cite{long1983real}, we obtain solutions that
are, on average, two orders of magnitude more accurate than solving the
original problem, by choosing a representative within the isomorphism class
whose value function has lower curvature.

This paper builds on the abstract dynamic programming framework of
\cite{sargent2025partially}. In that paper, individual dynamic programs are
represented as families of policy operators acting on a partially ordered space.
When dynamic programs are viewed in this way, studying conjugacy relationships
between related families of policy operators becomes a natural idea, since each
policy operator identifies a discrete time dynamical system.  The main deviation
from traditional dynamical systems methodology is that we use order-theoretic
conjugacy relationships, rather than topological ones, since the key objects
in optimization are order-theoretic and preserved by this relation.

This paper contributes to a large and growing literature that studies the
optimality properties of dynamic programs with nonstandard features, such as
recursive utility \citep{marinacci2010unique, bloise2024not,
rincon2024existence}, risk sensitive preferences \citep{bauerle2018stochastic},
hyperbolic discounting \citep{bauerle2021stochastic, balbus2022time}, and
Q-factors \citep{ma2022unbounded}. For example, our work relates to the
literature on existence and uniqueness in dynamic programs with recursive
preferences, such as \cite{bloise2018convex} and \cite{bloise2024not}. Our
approach complements this literature in two ways. First, rather than focusing on
specific models, we propose a general framework that, under appropriate
transformations, can simplify the analysis of such dynamic programs. Second, as
an application of our theory, we provide exact necessary and sufficient
conditions for optimality in dynamic programs with Epstein--Zin preferences and
state dependent discounting. 

% The two approaches are complementary: in settings where direct verification
% of uniqueness is delicate, an isomorphism can transform the problem into one
% where the conditions of \citet{bloise2024blame} or
% \citet{christensen2022existence} are straightforward to verify, as we
% illustrate in Section~\ref{sec:ez}.

While not the primary focus of this paper, our work also connects to the
literature on computational methods for solving dynamic economic models.
These include traditional methods such as projection and perturbation
\citep{foerster2016perturbation, bayer2020solving}, endogenous grid method \citep{carroll2006method,
  barillas2007generalization}, adaptive sparse grids \citep{brumm2017using},
and deep learning \citep{azinovic2022deep, maliar2021deep}. A notable feature
of our framework is that it can complement existing computation methods: any
suitable method can be applied to an isomorphic dynamic program to improve
numerical performance compared to solving the original problem.

The structure of the paper is as follows.  Section~\ref{s:prelim} introduces
order conjugacy and discusses its properties. Sections~\ref{s:adps}--\ref{s:aop}
recall some definitions and optimality results for ``abstract'' dynamic
programs, which are well-suited to our dynamical systems perspective.
Section~\ref{s:iso} introduces isomorphisms between abstract dynamic
programs and shows how isomorphisms preserve optimality properties.
Section~\ref{s:app} provides applications and Section~\ref{sec:concl} concludes.

\section{Preliminaries}\label{s:prelim}

This section introduces the key order-theoretic concepts needed for our
analysis of abstract dynamic programs. Additional definitions of posets,
dynamical systems, and related background are given in
Appendix~\ref{app:order}.  Throughout, if $A$ and $B$ are sets, then the
symbol $B^A$ indicates the set of all maps from $A$ to $B$.

A \navy{partially ordered set} (poset) is a pair $V = (V, \preceq)$ where
$\preceq$ is reflexive, antisymmetric, and transitive.  Given posets $(V,
\preceq)$ and $(W, \triangleleft)$, a function $F \colon V \to W$ is called
\navy{order-preserving} if $v \preceq w$ implies $Fv \triangleleft Fw$, and
\navy{order-reversing} if $v \preceq w$ implies $Fw \triangleleft Fv$.  A
bijection $F \colon V \to \hat V$ is an \navy{order isomorphism} if both $F$
and $F^{-1}$ are order-preserving, and an \navy{order anti-isomorphism} if
both $F$ and $F^{-1}$ are order-reversing.

Let $(V, S)$ be a dynamical system (a set $V$ with a self-map $S$) where $V$
is a poset and $S$ has a unique fixed point $\bar v$.  Following
\cite{sargent2025partially}, $(V, S)$ is called \navy{upward stable} if $v
\preceq S\,v$ implies $v \preceq \bar v$, \navy{downward stable} if $S\,v
\preceq v$ implies $\bar v \preceq v$, and \navy{order stable} if both hold.

Figure~\ref{f:up_down_stable} gives an illustration of an order stable map
$S$ on $V = [0,1]$. All points mapped up by $S$ lie below its unique fixed
point, while all points mapped down by $S$ lie above its fixed point.

\begin{figure}
    \centering
    \scalebox{0.45}{\includegraphics{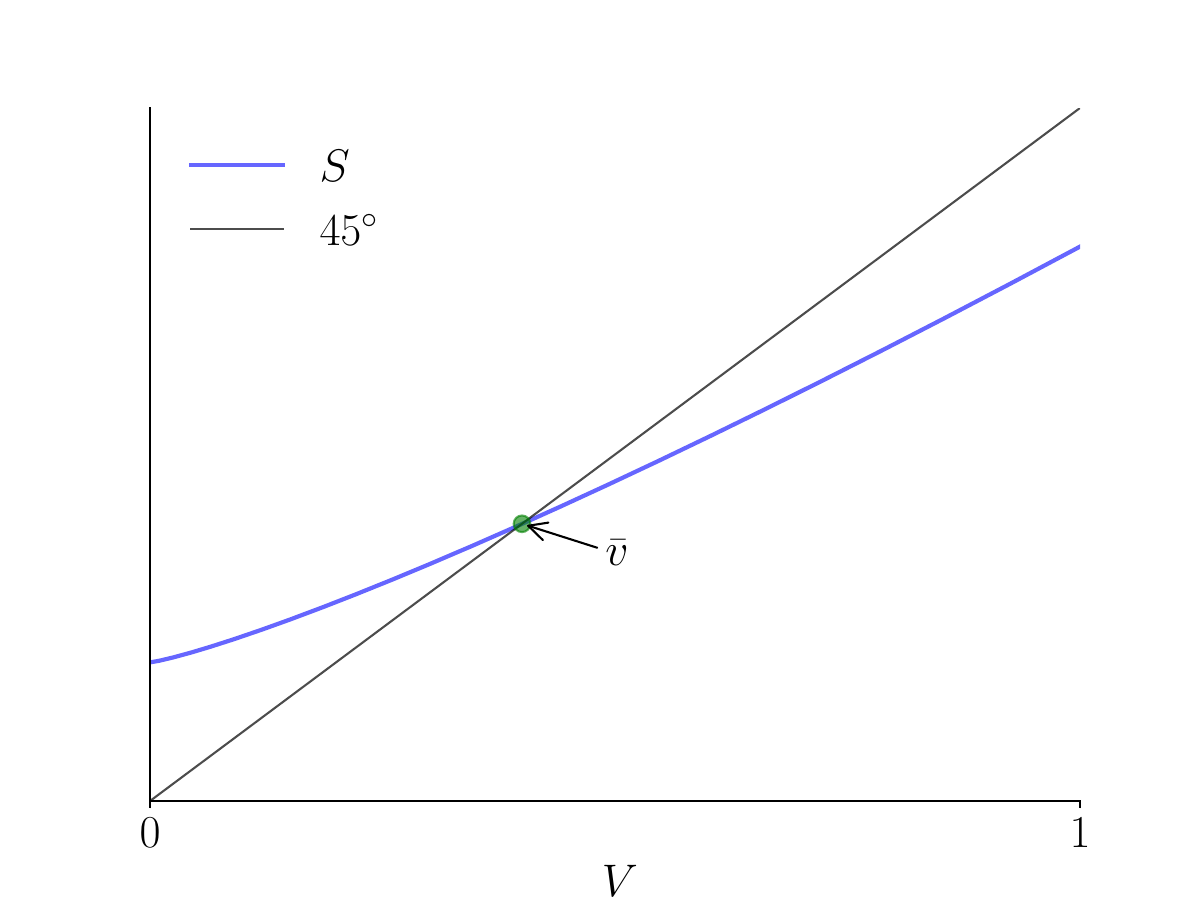}}
    \caption{\label{f:up_down_stable} An order stable map $S$ on $[0,1]$}
\end{figure}

Intuitively, order stability says that the fixed point $\bar v$ of $S$
separates $V$ into two regions: elements that $S$ maps upward all lie below
$\bar v$, and elements that $S$ maps downward all lie above $\bar v$. This is
an order-theoretic analogue of global stability. (A dynamical system $(V, S)$
on a metric space is called \navy{globally stable} if $S$ has a unique fixed
point $\bar v$ and $S^k v \to \bar v$ for all $v \in V$.) Whereas global
stability requires a topology and ensures convergence of trajectories, order
stability requires only a partial order and is not as strong. In fact, order
stability is implied by global stability when $S$ is order-preserving and the
space $V$ has both a topology \emph{and} a partial order:

\begin{example}\label{eg:pspace}
    Let $(V, \preceq, \rho)$ be a partially ordered space (i.e., a metric space
    $(V, \rho)$ where $\preceq$ is closed under limits). If $(V, S)$ is
    globally stable and $S$ is order-preserving, then $(V, S)$ is order
    stable. To see this, let $\bar v$ be the unique fixed point of $S$ in $V$.
    Upward stability holds because if $v \in V$ and $v \preceq S \, v$, then,
    iterating on this inequality and using the fact that $S$ is
    order-preserving,  we have $v \preceq S^k \, v$ for all $k \in \NN$.
    Applying global stability and taking the limit gives $v \preceq \bar
    v$.  Hence upward stability holds. The proof of downward stability is
    similar.
\end{example}

Two dynamical systems $(V, S)$ and $(\hat V, \hat S)$ on posets are called
\navy{order conjugate} under $F$ when they are conjugate (i.e., $F \circ S =
\hat S \circ F$ for some bijection $F$) and $F$ is an order isomorphism.
Order conjugacy is an equivalence relation that preserves order stability:
if $(V, S)$ and $(\hat V, \hat S)$ are order conjugate, then one is order
stable if and only if the other is (Lemma~\ref{l:ocos} in
Appendix~\ref{app:oc}).

\section{Abstract Dynamic Programs}\label{s:adps}

We follow the abstract dynamic programming framework of
\cite{sargent2025partially}, which was partly inspired by 
\cite{bertsekas2022abstract}.  For us, the benefit of the framework in \cite{sargent2025partially} is that
dynamic programs are represented as families of policy operators. This abstract
representation allows us to apply the notion of order conjugacy to study
relationships between dynamic programs.  

\subsection{Definition and Examples}

\cite{sargent2025partially} define an \navy{abstract dynamic program} (ADP)
to be a pair $\aA = (V, \{T_\sigma\}_{\sigma \in \Sigma})$, where 
\begin{enumerate}
    \item $V = (V, \preceq)$ is a partially ordered set and
    \item $\{T_\sigma\}_{\sigma \in \Sigma}$ is a family of order-preserving
        self-maps on $V$, indexed by $\sigma \in \Sigma$.
\end{enumerate}
Elements of the index set $\Sigma$ are referred to as \navy{policies} and
elements of $\{T_\sigma \}_{\sigma \in \Sigma}$ are called \navy{policy operators}.   When $\Sigma$
is understood, we often write $\{T_\sigma \}_{\sigma \in \Sigma}$ as $\{T_\sigma
\}$. In all applications, the significance of each policy operator $T_\sigma$ is that its
fixed point, denoted below by $v_\sigma$, represents the lifetime value (or cost) of following
policy $\sigma$.   

\begin{example}[MDPs]\label{eg:mdp}
    Consider a \navy{Markov decision process} (MDP; see, e.g.,
    \cite{puterman2005markov}) where the aim is to maximize $\EE \sum_{t \geq 0}
    \beta^t r(X_t, A_t)$ when $X_t$ takes values in finite set $\Xsf$ (the state
    space), $A_t$ takes values in finite set $\Asf$ (the action space), $\Gamma$
    is a nonempty correspondence from $\Xsf$ to $\Asf$, the set
        $\Gsf \coloneq \setntn{(x, a) \in \Xsf \times \Asf}{a \in \Gamma(x)}$
    denotes feasible state-action pairs, $r$ is a reward function defined on
    $\Gsf$, $\beta \in (0,1)$ is a discount factor, and $P \colon \Gsf \times
    \Xsf \to [0,1]$ provides transition probabilities. The Bellman equation for this problem is 
    \begin{equation}\label{eq:adpbell}
        v(x) = \max_{a \in \Gamma(x)} 
        \left\{
            r(x, a) + \beta \sum_{x'} v(x') P(x, a, x')
        \right\}
        \qquad (x \in \Xsf).
    \end{equation}
    The set of feasible policies is
        $\Sigma := 
            \setntn{\sigma \in \Asf^\Xsf}
            {\sigma(x) \in \Gamma(x) \text{ for all } x \in \Xsf}$.
    We combine $\RR^\Xsf$ (the set of all real-valued functions on $\Xsf$)
    with the pointwise partial order $\leq$ and, for $\sigma \in
    \Sigma$ and $v \in \RR^\Xsf$, define the MDP policy operator
    \begin{equation}\label{eq:tsig_mdp}
        (T_\sigma \, v)(x) 
            = r(x, \sigma(x)) + \beta \sum_{x'} v(x') P(x, \sigma(x), x')
        \qquad (x \in \Xsf).
    \end{equation}
    The pair $(\RR^\Xsf, \{T_\sigma\})$ is an ADP.
\end{example}

\begin{example}[Risk-sensitive MDPs]\label{eg:risksens}
    Risk-sensitive MDPs \citep{howard1972risk} modify the MDP model in
    Example~\ref{eg:mdp} so that the policy operators take the form 
    \begin{equation*}
        (T_\sigma^\theta \, v)(x) 
        = r(x, \sigma(x)) + \frac{\beta}{\theta}
            \ln \left[ 
                \sum_{x'} \exp(\theta v(x')) P(x, \sigma(x), x')
            \right]
    \end{equation*}
    where $\theta$ is some fixed value in $\Theta := \RR \setminus \{0\}$.
    The pair $(\RR^\Xsf, \{T_\sigma^\theta\})$ is an ADP.
\end{example}

Further examples of ADPs, including $Q$-factor and risk-sensitive $Q$-factor
formulations from the reinforcement learning literature, are given in
Appendix~\ref{app:qfac}.

\subsection{Lifetime Values}\label{ss:lifeval}

The objective of dynamic programming is to maximize or minimize
lifetime value.  Following \cite{sargent2025partially}, we identify the lifetime
value of any given policy $\sigma$ as the unique fixed point of $T_\sigma$,
whenever it exists. When it does exist, we denote this fixed point by $v_\sigma$ and call it
the \navy{$\sigma$-value function}. 

\begin{example}
    Consider the MDP setting of Example~\ref{eg:mdp}.
    Let $r_\sigma$ and $P_\sigma$ be defined by 
    \begin{equation}\label{eq:rps}
        r_\sigma(x) := r(x, \sigma(x))
        \quad \text{and} \quad
        (P_\sigma v)(x) := \sum_{x'}v(x')P(x, \sigma(x), x').
    \end{equation}
    The lifetime value of policy $\sigma$ given $X_0 = x$ is $v_\sigma(x) = \EE
    \sum_{t \geq 0} \beta^t r(X_t, \sigma(X_t))$, where $(X_t)_{t \geq 0}$ is a Markov
    chain generated by $P_\sigma$ with initial condition $X_0 = x \in \Xsf$. 
    Pointwise on $\Xsf$, we can express $v_\sigma$ as
    \begin{equation}\label{eq:vsigmdp}
        v_\sigma = \sum_{t \geq 0} (\beta P_\sigma)^t r_\sigma
                = (I-\beta P_\sigma)^{-1} r_\sigma
    \end{equation}
    (see, e.g., \cite{puterman2005markov}, Theorem~6.1.1, or
    \cite{kochenderfer2022algorithms}, Section~7.2). Equivalently, $v_\sigma$
    is the unique solution to the equation $v = r_\sigma + \beta P_\sigma \, v$.
    Inspecting the definition of $T_\sigma$ in \eqref{eq:tsig_mdp}, we see this is
    also equivalent to the statement that $v_\sigma$ is the unique fixed point of
    $T_\sigma$.
\end{example}

Similarly, for the risk-sensitive MDP of Example~\ref{eg:risksens}, the
operator $T_\sigma^\theta$ is a contraction on $\RR^\Xsf$ and its unique
fixed point represents the lifetime value of following $\sigma$.

\subsection{Greedy Policies}

The core idea behind the theory of dynamic programming is that 
an optimal policy can be obtained by choosing actions that solve a
two-period problem involving a Bellman equation \citep{bellman1957dynamic}.
The optimal actions in the two-period problem produce what are typically called
``greedy policies.'' For example, in the context of Example~\ref{eg:mdp}, a
policy $\sigma$ satisfying 
\begin{equation}\label{eq:mdpmvg}
    \sigma(x) \in \argmax
    \left\{
        r(x, a) + \beta \sum_{x'} v(x') P(x, a, x')
    \right\}
    \qquad \text{for all $x \in \Xsf$}.
\end{equation}
is called a ``greedy policy with respect to $v$.''  See, for example, \cite{kochenderfer2022algorithms}, Section~7.3.

Following \cite{sargent2025partially}, we generalize this idea to the abstract
setting.  In particular, given an ADP $(V, \{T_\sigma\})$ and an element
$v \in V$, a policy $\sigma$ in $\Sigma$ is called 
\begin{itemize}
    \item \navy{$v$-min-greedy} if $T_{\sigma} \, v \preceq T_\tau \, v$ for all
        $\tau \in \Sigma$, and 
    \item \navy{$v$-max-greedy} if $T_{\tau} \, v \preceq T_\sigma \, v$ for all
        $\tau \in \Sigma$. 
\end{itemize}
For example, in the context of Example~\ref{eg:mdp}, with $\leq$ as the
pointwise partial order, a policy $\sigma$ obeying
\eqref{eq:mdpmvg} satisfies $T_\tau \, v \leq T_\sigma \, v$ for
all $\tau \in \Sigma$, and hence is $v$-max-greedy. A $v$-min-greedy policy can
be constructed by replacing $\argmax$ in \eqref{eq:mdpmvg} with $\argmin$.

\subsection{Bellman Operators}

For a generic ADP $(V, \{T_\sigma\})$, we respectively define the \navy{Bellman
min-operator} and the \navy{Bellman max-operator} via 
\begin{equation}\label{eq:tbvemax}
    \tmin \, v := \bigwedge_\sigma T_\sigma \, v 
    \quad \text{and} \quad
    \tmax \, v := \bigvee_\sigma T_\sigma \, v 
\end{equation}
whenever the infimum (resp., supremum) exists. We say that $v \in V$ 
satisfies the \navy{Bellman max-equation} (resp., the \navy{Bellman min-equation})
if it is a fixed point of $\tmax$ (resp., $\tmin$). Notice that $\sigma \in
\Sigma$ is
\begin{enumerate}
    \item $v$-max-greedy if and only if $T_\sigma \, v = \tmax \, v$, and   
    \item $v$-min-greedy if and only if $T_\sigma \, v = \tmin \, v$.
\end{enumerate}

To illustrate, consider the MDP setting of Example~\ref{eg:mdp}.  Traditionally, 
the Bellman operator for this model is given by 
\begin{equation}\label{eq:mdpt}
    (Tv)(x) = \max_{a \in \Gamma(x)} 
    \left\{
        r(x, a) + \beta \sum_{x'} v(x') P(x, a, x')
    \right\}
    \qquad (x \in \Xsf).
\end{equation}
(See, e.g., \cite{puterman2005markov}.)
For the ADP $(\RR^\Xsf, \{T_\sigma\})$ generated by this MDP, this Bellman
operator exactly coincides with the Bellman max-operator $\tmax$ in
\eqref{eq:tbvemax}.  Replacing $\max$ with $\min$ in \eqref{eq:mdpt} produces
the Bellman min-operator from \eqref{eq:tbvemax}.

\subsection{Properties of ADPs}\label{ss:prop}

To obtain optimality results, we need to place structure on ADPs.  Here and
below, we call an ADP $\aA = (V, \{T_\sigma\})$ 
\begin{itemize}
    \item \navy{well-posed} if $T_\sigma$ has a unique fixed point $v_\sigma$ in
        $V$ for each $\sigma \in \Sigma$,
    \item \navy{order stable} if $(V, T_\sigma)$ is order stable for each
        $\sigma \in \Sigma$, 
    \item \navy{max-stable} if $\aA$ is order stable, each $v \in V$ has at
        least one max-greedy policy, and $\tmax$ has at least one fixed point in
        $V$, and
    \item \navy{min-stable} if $\aA$ is order stable, each $v \in V$ has at
        least one min-greedy policy, and $\tmin$ has at least one fixed point in
        $V$.
\end{itemize}

Well-posedness is a minimum regularity condition for ADPs.  Without it, we
cannot be sure that policies have well-defined lifetime values. Well-defined
lifetime values are essential because maximizing (or minimizing) lifetime value
over the set of all policies is the objective of dynamic programming.

% \begin{example}[Continuous time MDPs]
%     We return to the MDP setting of Example~\ref{eg:mdp} but
%     replacing the discount factor $\beta$ with a discount rate $\delta > 0$
%     and the stochastic kernel $P$ with an intensity kernel 
%     $Q$ from $\Gsf \times \Xsf \to \RR$ satisfying $\sum_{x' \in \Xsf} Q(x, a,
%     x') = 0$ for all $(x,a)$ in $\Gsf$ and $Q(x, a, x') \geq 0$ whenever $x
%     \not= x'$. Lifetime value of policy $\sigma$ is given by 
%     %
%     \begin{equation}\label{eq:ctlv}
%         v_\sigma (x) 
%         = \EE_x \int_0^\infty \exp(-\delta t) r(X_t, \sigma(X_t))  \diff t.
%     \end{equation}
%     %
%     where $Q_\sigma(x,x') := Q(x, \sigma(x), x')$ is the infinitesimal generator
%     of the continuous time Markov chain $(X_t)_{t \geq 0}$ and $r_\sigma(x) :=
%     r(x, \sigma(x))$.  The function $v_\sigma$ can alternatively be written as
%     $v_\sigma = (\delta I - Q_\sigma)^{-1} r_\sigma$  (see, e.g.,
%     \cite{guo2009continuous}, Lemma~4.16).  Rearranging this expression
%     and using the fact that $\delta I - Q_\sigma$ is bijective shows that
%     $v_\sigma$ is the unique fixed point of $T_\sigma \, v = r_\sigma +
%     (Q_\sigma + (1 - \delta) I) v$ in $\RR^\Xsf$.  Hence $(\RR^\Xsf,
%     \{T_\sigma\})$ is a well-posed ADP.
% \end{example}
%

Order stability is a natural regularity assumption for ADPs. To understand this,
suppose $\aA$ is well-posed and consider upward stability for an arbitrary
policy operator $T_\sigma$ with fixed point $v_\sigma$.  If $v \preceq T_\sigma
\, v$, then following policy $\sigma$ for one period offers an improvement in
value. Since the problem is stationary, this suggests that following the policy
forever will also be an improvement. Thus, we expect $v \preceq
v_\sigma$, in which case upward stability holds. Intuition for downward
stability is similar.

\begin{example}\label{eg:mdpos}
    Consider the MDP setting of Example~\ref{eg:mdp} and fix $\sigma \in \Sigma$. 
    Recall that the policy operator $T_\sigma$ has
    unique fixed point $v_\sigma := (I - \beta P_\sigma)^{-1} r_\sigma$.
    If $T_\sigma \, v \geq v$, then $r_\sigma + \beta P_\sigma \, v \geq v$ and
    hence $(I - \beta P_\sigma) v \leq r_\sigma$.  Since $(I - \beta
    P_\sigma)^{-1}$ is positive, we get $v \leq v_\sigma$.  Hence $(V,
    T_\sigma)$ is upward stable.  A similar proof shows that $(V, T_\sigma)$ is
    downward stable, and therefore order stable.
\end{example}

\begin{example}
    Consider the ADP $(\RR^\Xsf, \{T_\sigma\})$ from the MDP setting of Example~\ref{eg:mdp}. 
    For this case $\tmax$ is given by \eqref{eq:mdpt}.
    Since $\tmax$ is a contraction on $\RR^\Xsf$, it has a unique fixed point in $\RR^\Xsf$.  
    Since max-greedy policies always exist, and since $(\RR^\Xsf, \{T_\sigma\})$
    is order stable by Example~\ref{eg:mdpos}, we see
    that $(\RR^\Xsf, \{T_\sigma\})$ is max-stable.
\end{example}

Regarding the definition of max-stability (resp., min-stability), the Bellman
min- and max-operators are often contraction maps and existence of a fixed point
is easily verified (see, e.g., \cite{denardo1967contraction} or Chapter~2 of
\cite{bertsekas2022abstract}). Here is another useful condition, which covers the case where state and action spaces are finite.

\begin{proposition}\label{p:fposet}
    Let $\aA$ be order stable and suppose the set of policies is finite.  In
    this setting,
    \begin{enumerate}
        \item if each $v \in V$ has at least one max-greedy policy, then $\aA$ is max-stable, and
        \item if each $v \in V$ has at least one min-greedy policy, then $\aA$
            is min-stable.
    \end{enumerate}
\end{proposition}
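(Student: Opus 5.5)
We only need to show that $\tmax$ has a fixed point in $V$ (case (i)); case (ii) is the order dual. The idea is to run Howard policy iteration and use finiteness of $\Sigma$ to force termination.

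First note that order stability includes well-posedness. The definition of order stability for $(V,T_\sigma)$ presupposes that $T_\sigma$ has a unique fixed point, so each $v_\sigma$ exists.

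Start from any $\sigma_0 \in \Sigma$. Given $\sigma_k$, let $\sigma_{k+1}$ be a $v_{\sigma_k}$-max-greedy policy, which exists by hypothesis. Then
\[
    v_{\sigma_k} = T_{\sigma_k} v_{\sigma_k} \preceq T_{\sigma_{k+1}} v_{\sigma_k} = \tmax v_{\sigma_k}.
\]
Upward stability of $(V, T_{\sigma_{k+1}})$ then gives $v_{\sigma_k} \preceq v_{\sigma_{k+1}}$. So the sequence $(v_{\sigma_k})$ is increasing in the partial order.

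Since $\Sigma$ is finite, the set $\{v_\sigma : \sigma \in \Sigma\}$ is finite. Among the first $|\Sigma|+1$ iterates, two policies must coincide, say $\sigma_j = \sigma_m$ with $j<m$. Then
\[
    v_{\sigma_j} \preceq v_{\sigma_{j+1}} \preceq \cdots \preceq v_{\sigma_m} = v_{\sigma_j}.
\]
By antisymmetry, $v_{\sigma_j} = v_{\sigma_{j+1}}$.

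Set $\bar v := v_{\sigma_j}$. Then
\[
    \tmax \bar v = T_{\sigma_{j+1}} v_{\sigma_j} = T_{\sigma_{j+1}} v_{\sigma_{j+1}} = v_{\sigma_{j+1}} = \bar v.
\]
The first equality uses that $\sigma_{j+1}$ is $v_{\sigma_j}$-greedy, and the second uses $v_{\sigma_j} = v_{\sigma_{j+1}}$. Hence $\bar v$ is a fixed point of $\tmax$, and together with order stability and existence of greedy policies, $\aA$ is max-stable.

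The only delicate point is getting termination without assuming that ties are broken consistently. The monotone chain plus antisymmetry handles this, since a repeated policy collapses the chain between its two occurrences. For (ii), replace max-greedy by min-greedy and upward by downward stability. This produces a decreasing chain $v_{\sigma_{k+1}} \preceq v_{\sigma_k}$, and the same argument yields a fixed point of $\tmin$.
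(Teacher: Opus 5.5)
Your proof is correct and follows essentially the same route as the paper. The paper derives the proposition from Lemma~\ref{l:hmu}: Howard policy iteration produces an increasing sequence of $\sigma$-value functions by upward stability, finiteness of $\Sigma$ forces it to stabilize, and a stabilized value is a fixed point of $\tmax$. The only differences are cosmetic: you pigeonhole on policies rather than values, which avoids the fixed tie-breaking rule needed to make $\Hmax$ a function, and you do not identify the fixed point as $\vmax$, which the paper does via downward stability but the proposition does not require.
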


Proposition~\ref{p:fposet} is proved in the appendix (page~\pageref{pp:fposet}).

\begin{example}
    \label{eg:risksens_ms}
    The risk-sensitive MDP from Example~\ref{eg:risksens} is max-stable.
    Evidently $T_\sigma^\theta$ is order-preserving on $V=\RR^\Xsf$.
    Moreover, $(V, T_\sigma^\theta)$ is globally stable (see, e.g.,
    \cite{bauerle2018stochastic}) and hence order stable, by
    Example~\ref{eg:pspace}. This shows that $(V, \{T_\sigma^\theta\})$ is an
    order stable ADP. Given $v \in V$, we can construct a $v$-max-greedy
    policy $\sigma$ by setting
    \begin{equation*}
        \sigma(x) \in \argmax
        \left\{
            r(x, a) + \frac{\beta}{\theta}
            \ln \left[ 
                \sum_{x'} \exp(\theta v(x')) P(x, a, x')
            \right]
        \right\}
    \end{equation*}
    for all $x \in \Xsf$.  As the policy set $\Sigma$ is finite (since $\Xsf$
    and the choice sets are all finite), Proposition~\ref{p:fposet} implies that
    $(V, \{T_\sigma^\theta\})$ is max-stable.
\end{example}

\section{Optimality}\label{s:aop}

In this section we state conditions for optimality in our abstract setting.
These conditions are closely related to those in \cite{sargent2025partially}.
Later, in Section~\ref{s:iso}, we will study how optimality properties are
preserved under transformations.

\subsection{Max-Optimality}\label{ss:maop}

Let $\aA$ be a well-posed ADP. We let $V_\Sigma := \{v_\sigma\}_{\sigma \in
\Sigma}$ denote the set of $\sigma$-value functions.
If $V_\Sigma$ has a greatest element, then we denote it by $\vmax$ and call it
the \navy{max-value function}. A policy $\sigma \in \Sigma$ is called
\navy{max-optimal} for $\aA$ if $v_\sigma$ is a greatest element of $V_\Sigma$;
that is, if $\vmax$ exists and $v_\sigma = \vmax$.

We define a map $\Hmax$ from $V$ to $\{v_\sigma\}$ via
    $\Hmax \, v = v_\sigma$ where $\sigma$ is $v$-max-greedy.
Iterating with $\Hmax$ is an abstraction of Howard policy
iteration.\footnote{For $\Hmax$ to be well-defined, we must always select
    the same $v$-greedy policy when the operator is applied to $v$. We can use the
    axiom of choice to assign to each $v$ a designated $v$-greedy policy,
    although, in applications, a simple rule usually suffices.
    For example, if $\Sigma$ is finite, we can
    enumerate the policy set $\Sigma$ and choose the
    first $v$-greedy policy.} In what follows, we call $\Hmax$ the \navy{Howard max-operator}
generated by the ADP.

In the following result, we take $\aA$ to be an ADP with Bellman operator
$\tmax$.

\begin{theorem}[Max-optimality]\label{t:fbk}
    If $\aA$ is max-stable, then 
    \begin{enumerate}
        \item the max-value function $\vmax$ exists in $V$, 
        \item $\vmax$ is the unique solution to the Bellman max-equation in $V$, 
        \item a policy is max-optimal if and only if it is $\vmax$-max-greedy.
        \item at least one max-optimal policy exists.
    \end{enumerate}
    If, in addition, $\Sigma$ is finite, then Howard max-policy iteration
    converges to $\vmax$ in finitely many steps.
\end{theorem}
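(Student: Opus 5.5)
Let $\bar v$ be a fixed point of $\tmax$; max-stability guarantees one exists. Order stability gives well-posedness, since each $T_\sigma$ has a unique fixed point $v_\sigma$. The plan is to prove $\bar v = \vmax$ and then read off (i)--(iv).

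\emph{Step 1: $\bar v$ is an upper bound of $V_\Sigma$.} Fix $\sigma \in \Sigma$. By definition of $\tmax$ as a supremum, $T_\sigma \bar v \preceq \tmax \bar v = \bar v$. Downward stability of $(V, T_\sigma)$ then gives $v_\sigma \preceq \bar v$.

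\emph{Step 2: $\bar v$ is attained.} Let $\sigma$ be $\bar v$-max-greedy, which exists by max-stability. Then $T_\sigma \bar v = \tmax \bar v = \bar v$, so $\bar v$ is a fixed point of $T_\sigma$. Uniqueness of that fixed point gives $\bar v = v_\sigma \in V_\Sigma$. Hence $\bar v$ is the greatest element of $V_\Sigma$, that is, $\vmax = \bar v$, and (i) holds.

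For (ii), any fixed point of $\tmax$ equals the greatest element of $V_\Sigma$ by Steps 1--2. By antisymmetry the fixed point is therefore unique.

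For (iii), Step 2 already shows that a $\vmax$-max-greedy policy is max-optimal. Conversely, suppose $v_\sigma = \vmax$. Then $T_\sigma \vmax = \vmax = \tmax \vmax$, so $\sigma$ is $\vmax$-max-greedy. Item (iv) follows from (iii) together with the existence of greedy policies.

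\emph{Howard policy iteration.} Let $\sigma_{k+1}$ be $v_{\sigma_k}$-max-greedy, so $v_{k+1} = \Hmax v_k$ with $v_k := v_{\sigma_k}$. Then
\[
  T_{\sigma_{k+1}} v_k = \tmax v_k \succeq T_{\sigma_k} v_k = v_k .
\]
Upward stability of $T_{\sigma_{k+1}}$ gives $v_k \preceq v_{k+1}$, so the sequence is monotone increasing.

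If $v_{k+1} = v_k$, then
\[
  \tmax v_k = T_{\sigma_{k+1}} v_k = T_{\sigma_{k+1}} v_{k+1} = v_{k+1} = v_k ,
\]
so $v_k$ is a fixed point of $\tmax$ and equals $\vmax$ by (ii).

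The main obstacle is ensuring that a strict improvement cannot repeat indefinitely. While no fixed point has been reached, $v_{k+1} \neq v_k$, so the sequence is strictly increasing and, by antisymmetry, its values are distinct. Since $V_\Sigma$ is finite when $\Sigma$ is finite, only finitely many distinct values are available. Hence stagnation, and therefore convergence to $\vmax$, occurs after at most $|\Sigma|$ steps.
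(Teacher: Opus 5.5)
Your proof is correct and follows essentially the same route as the paper. Downward stability shows that a Bellman fixed point $\bar v$ dominates every $v_\sigma$, and a $\bar v$-max-greedy policy together with uniqueness of the fixed point of $T_\sigma$ puts $\bar v$ in $V_\Sigma$; upward stability, finiteness of $V_\Sigma$ and antisymmetry then give finite termination of Howard iteration. Two small points should be tidied: Howard iteration should start from an arbitrary $v \in V$, which is harmless since $\Hmax v \in V_\Sigma$ after one step, and you should note that the iterates stay at $\vmax$ once reached, which follows from your part (iii) because $\Hmax \vmax = \vmax$.
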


The last statement means that, for all $v \in V$, there exists a $K \in \NN$
such that $k \geq K$ implies $\Hmax^k v = \vmax$. The proof of
Theorem~\ref{t:fbk} is given in the appendix.

Analogous results hold for minimization; see Appendix~\ref{app:minim} for
the definitions of min-optimality, the Howard min-operator, dual ADPs, and
the corresponding min-optimality theorem.

\section{Isomorphic ADPs}\label{s:iso}

In this section we introduce isomorphic relationships between ADPs and explore
their implications for optimality.  True to their  name, isomorphic
relationships are symmetric, transitive and reflexive.
We show that isomorphic ADPs have identical optimality properties.

%If $V$ and $\hat V$ are lattices, $F$ is an order isomorphism, and $\{v_i\}_{i \in
%I}$ is a finite subset of $V$, then $F \bigvee_i v_i = \bigvee_i F v_i$ and $F
%\bigwedge_i v_i = \bigwedge_i F v_i$. If
%$F$ is an anti-isomorphism, then $F \bigvee_i v_i = \bigwedge_i F v_i$ and $F
%\bigwedge_i v_i = \bigvee_i F v_i$.

\subsection{Definition and Properties}

Let $\aA = (V, \{T_\sigma\})$ and $\hat \aA = (\hat V, \{\hat T_\sigma\})$ be
two ADPs. We call $\aA$ and $\hat \aA$ \navy{isomorphic}\index{Isomorphic
ADPs} under $F$ if these two ADPs have the same policy set $\Sigma$ and
$F$ is an order isomorphism from $V$ to $\hat V$ such
that 
\begin{equation}\label{eq:adpih}
    F  \circ T_\sigma = \hat T_\sigma  \circ F 
    \;\;
    \text{ on } V
    \text{ for all } 
    \sigma \in \Sigma.
\end{equation}
In other words, $(V, T_\sigma)$ and $(\hat V, \hat T_\sigma)$ are order
conjugate under $F$ for all $\sigma \in \Sigma$.\footnote{While the definition
    requires that the two ADPs have the same policy set $\Sigma$, it
    suffices that the policy sets can be
put in one-to-one correspondence with each other.}

An illustration of isomorphic ADPs built from $Q$-factor formulations is
given in Appendix~\ref{app:qfac}. We also provide a further illustration
involving risk-sensitive and Kreps--Porteus preferences in
Section~\ref{ss:rskp}.

\begin{remark}
    For ADPs $\aA, \hat \aA$, let $\aA \sim \hat \aA$ indicate that $\aA$ and $\hat
    \aA$ are isomorphic.  It is elementary to show that the relation $\sim$ is
    reflexive, symmetric and transitive. Hence $\sim$ is an equivalence relation on
    the set of all ADPs.
\end{remark}

\subsection{Isomorphisms and Optimality}\label{sss:iaoth}

We seek a connection between  value functions and optimality  properties of
isomorphic ADPs. The theory below provides this relationship.
For all of this section, we take $\aA = (V, \{T_\sigma\})$ and
$\hat \aA = (\hat V, \{\hat T_\sigma\})$ to be two ADPs with the same policy
set. When they exist, we let 
\begin{itemize}
    \item $v_\sigma$ (resp., $\hat v_\sigma$) be the unique fixed point of $T_\sigma$
        (resp., $\hat T_\sigma$)
    \item $\tmax$ (resp., $\htmax$) be the Bellman max-operator of $\aA$ 
        (resp., $\hat \aA$) 
    \item $\tmin$ (resp., $\htmin$) be the Bellman min-operator of $\aA$ 
        (resp., $\hat \aA$) 
    \item $\vmax$ (resp., $\hvmax$) be the max-value function of $\aA$ 
        (resp., $\hat \aA$)
    \item $\vmin$ (resp., $\hvmin$) be the min-value function of $\aA$ 
        (resp., $\hat \aA$)
\end{itemize}

Isomorphic ADPs share the same regularity properties:

\begin{theorem}\label{t:iso}
    If $\aA$ and $\hat \aA$ are isomorphic under $F$, then 
    \begin{enumerate}
        \item $\aA$ is well-posed if and only if $\hat \aA$ is well-posed.
        \item $\aA$ is order stable if and only if $\hat \aA$ is order stable.
        \item $\aA$ is max-stable if and only if $\hat \aA$ is max-stable.
            In this case, 
            \begin{equation}\label{eq:isotmax}
                F \circ \tmax = \htmax \circ F
                \quad \text{and} \quad
                \hvmax = F \, \vmax.  
            \end{equation}
            Moreover, $\aA$ and $\hat \aA$ have the same 
             max-optimal policies.
        \item $\aA$ is min-stable if and only if $\hat \aA$ is min-stable.
            In this case, 
            \begin{equation}\label{eq:isotmin}
                F \circ \tmin = \htmin \circ F
                \quad \text{and} \quad
                \hvmin = F \, \vmin.  
            \end{equation}
            Moreover, $\aA$ and $\hat \aA$ have the same min-optimal policies.
    \end{enumerate}
\end{theorem}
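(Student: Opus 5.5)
By symmetry of the isomorphism relation (with $F^{-1}$ an order isomorphism from $\hat V$ to $V$ satisfying $F^{-1}\circ \hat T_\sigma = T_\sigma \circ F^{-1}$, obtained by composing \eqref{eq:adpih} with $F^{-1}$ on both sides), it suffices in each part to prove one direction plus the displayed identities. The plan is to push every defining property through $F$ pointwise in $\sigma$, using only that $F$ is an order-preserving bijection with order-preserving inverse.

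For (i), if $v_\sigma$ is a fixed point of $T_\sigma$ then $\hat T_\sigma F v_\sigma = F T_\sigma v_\sigma = F v_\sigma$, so $F v_\sigma$ is a fixed point of $\hat T_\sigma$. Conversely, any fixed point $\hat w$ of $\hat T_\sigma$ gives the fixed point $F^{-1}\hat w$ of $T_\sigma$. Since $F$ is a bijection, the fixed point sets correspond, so uniqueness transfers, and $\hat v_\sigma = F v_\sigma$. For (ii), I will invoke Lemma~\ref{l:ocos}: each pair $(V,T_\sigma)$, $(\hat V,\hat T_\sigma)$ is order conjugate under $F$, so order stability of one is equivalent to order stability of the other. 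Applying this for every $\sigma$ gives the claim.

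For (iii), the key step is a greedy correspondence. A policy $\sigma$ is $v$-max-greedy for $\aA$ if and only if it is $Fv$-max-greedy for $\hat\aA$. Indeed, $T_\tau v \preceq T_\sigma v$ for all $\tau$ is equivalent, since $F$ and $F^{-1}$ preserve order, to $F T_\tau v \preceq F T_\sigma v$, that is, to $\hat T_\tau Fv \preceq \hat T_\sigma Fv$. Since $F$ is surjective, existence of max-greedy policies at every $v\in V$ is equivalent to existence at every $\hat v\in\hat V$. Next, whenever a max-greedy $\sigma$ exists at $v$, the supremum $\tmax v = T_\sigma v$ is attained and $\htmax Fv = \hat T_\sigma Fv = F T_\sigma v = F\tmax v$. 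This gives $F\circ\tmax = \htmax\circ F$ on all of $V$ under max-stability, so fixed points of $\tmax$ map bijectively onto fixed points of $\htmax$. Combined with (ii), max-stability transfers. The main subtlety is to avoid claiming that $F$ preserves arbitrary suprema in general. Here it does, because an order isomorphism preserves all existing suprema, but I would rather use attainment via greedy policies, which is all that is needed. For the value function, $\{\hat v_\sigma\} = F(V_\Sigma)$, and $F$ maps greatest elements to greatest elements, so $\hvmax = F\vmax$. Then $\sigma$ is max-optimal for $\aA$ iff $v_\sigma = \vmax$, iff $F v_\sigma = F\vmax$, iff $\hat v_\sigma = \hvmax$, which shows the two ADPs have the same max-optimal policies. (Alternatively, use Theorem~\ref{t:fbk}(iii) together with the greedy correspondence at $\vmax$.)

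Part (iv) is identical with the inequalities reversed and infima in place of suprema. It can also be obtained by applying (iii) to the dual ADPs described in Appendix~\ref{app:minim}, which remain isomorphic under $F$ viewed as a map between the reversed orders. I expect no real obstacle; the only care needed is the well-definedness of $\tmax$, which is handled by restricting to the max-stable case, where greedy policies exist everywhere.
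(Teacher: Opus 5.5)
Your proposal is correct and follows essentially the same route as the paper. Both proofs transfer greedy policies through $F$, derive $F\circ\tmax=\htmax\circ F$, and rely on Lemma~\ref{l:ocos} and the conjugacy to move order stability and Bellman fixed points across. The differences are minor: you obtain the commutation from attainment at greedy policies rather than from $F$ preserving suprema, and you get $\hvmax=F\vmax$ and the shared optimal policies directly from $\hat v_\sigma=Fv_\sigma$ and $\{\hat v_\sigma\}=F(V_\Sigma)$, rather than from uniqueness of the Bellman fixed point in Theorem~\ref{t:fbk}, which is slightly more direct.
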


The proof (page~\pageref{pf:iso}) uses condition \eqref{eq:adpih} to show
that $F$ commutes with the Bellman operators, and then combines
Theorem~\ref{t:fbk} with properties of order conjugate systems.

\subsection{Anti-Isomorphic ADPs}\label{ss:aiadps}

Let $\aA = (V, \{T_\sigma\})$ and $\hat \aA = (\hat V, \{\hat T_\sigma\})$ be
two ADPs. We call $\aA$ and $\hat \aA$ \navy{anti-isomorphic} under $F$ if
they have the same policy set $\Sigma$ and, in addition, there exists an
order anti-isomorphism $F$ from $V$ to $\hat V$ such that \eqref{eq:adpih}
holds (see Example~\ref{eg:fei}).

If $\aA \asim \hat \aA$ indicates that $\aA$ and $\hat \aA$ are anti-isomorphic,
then $\asim$ is symmetric and transitive but, in general, not reflexive.

Here is an optimality result for anti-isomorphic ADPs that parallels
Theorem~\ref{t:iso}.

\begin{theorem}\label{t:antiiso}
    If $\aA$ and $\hat \aA$ are anti-isomorphic under $F$, then 
    \begin{enumerate}
        \item $\aA$ is well-posed if and only if $\hat \aA$ is well-posed.
        \item $\aA$ is order stable if and only if $\hat \aA$ is order stable.
        \item $\aA$ is max-stable if and only if $\hat \aA$ is min-stable.
            In this case, 
            \begin{equation}\label{eq:antiisotmax}
                F \circ \tmax = \htmin \circ F
                \quad \text{and} \quad
                \hvmin = F \, \vmax.  
            \end{equation}
            Moreover, $\sigma \in \Sigma$ is max-optimal for $\aA$ if and only
            if $\sigma$ is min-optimal for $\hat \aA$.
    \end{enumerate}
\end{theorem}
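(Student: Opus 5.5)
The cleanest route is to reduce Theorem~\ref{t:antiiso} to Theorem~\ref{t:iso} by reversing the order on $\hat V$. Let $\hat V^\partial := (\hat V, \succeq)$ denote $\hat V$ with the dual order. Each $\hat T_\sigma$ is order-preserving on $\hat V$, hence also on $\hat V^\partial$, so $\hat \aA^\partial := (\hat V^\partial, \{\hat T_\sigma\})$ is an ADP with the same policy set. An order anti-isomorphism $F \colon V \to \hat V$ is exactly an order isomorphism $F \colon V \to \hat V^\partial$. The conjugacy \eqref{eq:adpih} does not refer to the order at all. Hence $\aA$ and $\hat \aA^\partial$ are isomorphic under $F$. (This is presumably the dual ADP of Appendix~\ref{app:minim}. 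If that appendix already records the relationship between max- and min-notions for the dual, I will cite it rather than reprove it.)

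The next step is a dictionary between $\hat \aA$ and $\hat \aA^\partial$, each item checked directly from the definitions.
\begin{enumerate}
\item Fixed points of $\hat T_\sigma$ do not depend on the order, so well-posedness of $\hat \aA$ and $\hat \aA^\partial$ coincide, and $\hat v_\sigma$ is the same in both.
\item Upward stability of $(\hat V^\partial, \hat T_\sigma)$ reads: $\hat T_\sigma v \preceq v$ implies $\hat v_\sigma \preceq v$. This is downward stability in $\hat V$, and vice versa. So order stability is self-dual.
\item Suprema in $\hat V^\partial$ are infima in $\hat V$. Hence the Bellman max-operator of $\hat \aA^\partial$ is $\htmin$.
\item Max-greedy policies for $\hat \aA^\partial$ are exactly the min-greedy policies for $\hat \aA$.
\item A greatest element of $\hat V_\Sigma$ in the dual order is a least element in $\hat V$. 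So the max-value function of $\hat \aA^\partial$ is $\hvmin$, and max-optimal policies of $\hat \aA^\partial$ are min-optimal policies of $\hat \aA$.
\end{enumerate}
Combining these, $\hat \aA^\partial$ is max-stable if and only if $\hat \aA$ is min-stable.

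With the dictionary in hand, the three parts follow. Parts (i) and (ii) come from Theorem~\ref{t:iso}(i),(ii) applied to $\aA \sim \hat \aA^\partial$, together with dictionary items (i) and (ii). For part (iii), Theorem~\ref{t:iso}(iii) gives that $\aA$ is max-stable if and only if $\hat \aA^\partial$ is max-stable, which by the dictionary holds if and only if $\hat \aA$ is min-stable. In that case \eqref{eq:isotmax} for $\hat \aA^\partial$ translates into $F \circ \tmax = \htmin \circ F$ and $\hvmin = F\,\vmax$. Likewise, "same max-optimal policies" translates into: $\sigma$ is max-optimal for $\aA$ if and only if it is min-optimal for $\hat \aA$.

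The main obstacle is bookkeeping rather than substance. One must confirm that the paper's definitions of the Bellman operators, greedy policies, min-stability and min-optimality are literally the order-duals of the max versions, including the existence clauses for the infimum and supremum. One must also confirm that nothing in Theorem~\ref{t:iso} uses structure on $\hat V$ beyond its partial order. If any appendix definition of min-optimality is stated differently, I would fall back on a direct proof that mirrors the proof of Theorem~\ref{t:iso}. In that proof, $F$ maps suprema to infima, since $F$ and $F^{-1}$ are order-reversing, so $F \bigvee_\sigma T_\sigma v = \bigwedge_\sigma \hat T_\sigma F v$. Order stability would then transfer via the anti-conjugate version of Lemma~\ref{l:ocos}, in which $F$ swaps upward and downward stability.
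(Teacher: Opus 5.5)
Your proposal is correct and follows the paper's route. The paper also notes that $\aA$ is isomorphic to the dual $\hat \aA^\partial$ under $F$, applies Theorem~\ref{t:iso}, and translates back using Lemma~\ref{l:odod} and Lemma~\ref{l:mmp}, which are exactly your dictionary items. Your version gets both directions of each equivalence at once, whereas the paper argues one direction and leaves the converse to symmetry.
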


The proof of Theorem~\ref{t:antiiso} is given in the appendix
(page~\pageref{pf:antiiso}).

\section{Applications}\label{s:app}

In this section we show how isomorphic relationships can
simplify or illuminate dynamic programming problems.

\subsection{Modified Epstein--Zin Equations}\label{ss:modep}

Consider an Epstein--Zin version of the MDP in Example~\ref{eg:mdp}  (see, e.g.,
\cite{epstein1989risk} or \cite{weil1990nonexpected}), in which a Bellman
max-equation takes the form
\begin{equation*}
    v(x) =
    \max_{a \in \Gamma(x)}
        \left\{
            r(x, a)^\alpha + \beta(x)
            \left( 
                \sum_{x'} v(x')^\gamma P(x, a, x')
            \right)^{\alpha/\gamma}
        \right\}^{1/\alpha}.
\end{equation*}
Following much of the recent literature, we have allowed the rate of time preference
$\beta$ to depend on the state (see, e.g., \cite{albuquerque2016valuation},
\cite{schorfheide2018identifying}, \cite{degroot2018},
\cite{gomez2020important}). The function $\beta$ maps $\Xsf$ to $\RR_+$ and
$\gamma$ and $\alpha$ are nonzero parameters. Other details are as in
Example~\ref{eg:mdp}. Using the symbols $r_\sigma$ and $P_\sigma$ from
\eqref{eq:rps}, the policy operator $T_\sigma$ can be written as
\begin{equation}\label{eq:ezfp2}
    T_\sigma \, v
    = \left\{
        r_\sigma^\alpha + \beta 
            \left( P_\sigma \, v^\gamma \right)^{\alpha / \gamma}
    \right\}^{1/\alpha},
\end{equation}
where powers are taken pointwise.
Since $\gamma$ and $\alpha$ can be negative, we assume that $r$ is positive. We
also suppose that $P_\sigma$ is irreducible for all $\sigma \in
\Sigma$. Under these assumptions, $T_\sigma$ is an order-preserving self-map on
$(V, \leq)$, the set of all strictly positive functions on $\Xsf$
paired with the pointwise partial order, and $\aA := (V, \{T_\sigma\})$ is an
ADP.

Now set
\begin{equation}\label{eq:bezt}
    \theta := \frac{\gamma}{\alpha}
    \quad \text{and} \quad
    \hat T_\sigma \, v
    := \left\{
            r_\sigma^\alpha
            + \beta \left( P_\sigma v \right)^{1/\theta}
        \right\}^{\theta}.
\end{equation}
The pair $\hat \aA = (V, \{\hat T_\sigma\})$ is also an ADP.

\begin{lemma}\label{l:f}
    The following relationships hold:
    \begin{enumerate}
        \item If $\gamma > 0$, then $\aA$ and $\hat \aA$ are isomorphic.  
        \item If $\gamma < 0$, then $\aA$ and $\hat \aA$ are anti-isomorphic.  
    \end{enumerate}
\end{lemma}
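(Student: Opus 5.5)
The natural candidate for the link is the pointwise power map $F v := v^\gamma$ on $V$, the set of strictly positive functions on $\Xsf$. Since $\gamma \neq 0$, $F$ is a bijection from $V$ onto $V$, with inverse $F^{-1} w = w^{1/\gamma}$, again taken pointwise. When $\gamma > 0$, both $t \mapsto t^\gamma$ and $t \mapsto t^{1/\gamma}$ are strictly increasing on $(0,\infty)$, so $F$ and $F^{-1}$ are order-preserving for the pointwise order and $F$ is an order isomorphism. When $\gamma < 0$, both maps are strictly decreasing, so $F$ is an order anti-isomorphism. Since both ADPs share the policy set $\Sigma$, the two claims of the lemma then reduce to checking the conjugacy condition \eqref{eq:adpih}, namely $F \circ T_\sigma = \hat T_\sigma \circ F$ on $V$ for every $\sigma \in \Sigma$.

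To check this, fix $\sigma$ and $v \in V$ and compute pointwise. By \eqref{eq:ezfp2},
\begin{equation*}
    F(T_\sigma v) = (T_\sigma v)^\gamma
    = \left\{ r_\sigma^\alpha + \beta (P_\sigma v^\gamma)^{\alpha/\gamma} \right\}^{\gamma/\alpha}.
\end{equation*}
By \eqref{eq:bezt}, using $1/\theta = \alpha/\gamma$ and $\theta = \gamma/\alpha$,
\begin{equation*}
    \hat T_\sigma (F v) = \left\{ r_\sigma^\alpha + \beta (P_\sigma v^\gamma)^{\alpha/\gamma} \right\}^{\gamma/\alpha}.
\end{equation*}
The two expressions are identical, which gives \eqref{eq:adpih}. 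The only care needed is in the exponent rule $((\,\cdot\,)^{1/\alpha})^\gamma = (\,\cdot\,)^{\gamma/\alpha}$, which is valid because every base is strictly positive.

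The one point requiring attention, and the main (mild) obstacle, is confirming that all these quantities are well defined and strictly positive, so that $T_\sigma$ and $\hat T_\sigma$ really are self-maps of $V$ and the power manipulations are legitimate. Since $r > 0$, $v > 0$ and $P_\sigma$ is a stochastic matrix, we have $P_\sigma v^\gamma > 0$. With $\beta \geq 0$, the bracketed term is therefore at least $r_\sigma^\alpha > 0$. Hence arbitrary real powers of these terms are well defined and strictly positive, so the identities above hold in $V$, and $F$ maps $V$ onto $V$. Combining this with the monotonicity of $F$ in the two sign cases yields (i) when $\gamma > 0$ and (ii) when $\gamma < 0$.
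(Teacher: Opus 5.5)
Your proposal is correct and follows the paper's proof: the paper uses the same map $Fv = v^\gamma$, checks $F \circ T_\sigma = \hat T_\sigma \circ F$ by the same direct computation with $\theta = \gamma/\alpha$, and reads off isomorphism or anti-isomorphism from the sign of $\gamma$. Your extra checks fill in details the paper leaves implicit: that $F^{-1} w = w^{1/\gamma}$ has the same monotonicity as $F$, and that all bases are strictly positive so the power identities are valid.
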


\begin{proof}
    Let $F \colon V \to V$ be the bijective map $Fv = v^\gamma$. 
    Fixing $\sigma$ and applying \eqref{eq:ezfp2} yields
    \begin{equation*}
       F \, T_\sigma \, v
        = (T_\sigma \, v)^\gamma 
        = \left\{
            r_\sigma^\alpha + \beta 
                \left( P_\sigma \, v^\gamma \right)^{\alpha / \gamma}
        \right\}^{\gamma/\alpha}
        = \left\{
            r_\sigma^\alpha + \beta 
                \left( P_\sigma \, v^\gamma \right)^{1/ \theta}
        \right\}^{\theta}.
    \end{equation*}
    Inspection of \eqref{eq:bezt} shows that $\hat T_\sigma \, Fv = \hat
    T_\sigma \, v^\gamma$ is identical
    to the last expression in the display above.
    Hence $F \circ T_\sigma = \hat T_\sigma \circ F$ on $V$.  If $\gamma > 0$, then $F$ is
    order-preserving, so $\aA$ and $\hat \aA$ are isomorphic. If $\gamma < 0$,
    then $F$ is order-reversing, so $\aA$ and $\hat \aA$ are anti-isomorphic.
\end{proof}

Lemma~\ref{l:f} gives us a way to solve  for max-optimal policies of $\aA$ by
studying $\hat \aA$ (and applying either Theorem~\ref{t:iso} or
Theorem~\ref{t:antiiso}).  This is convenient because $\hat \aA$ is easier to
analyze.  The next section illustrates.

\subsection{Characterizing Optimality}\label{ss:ezexact}

Suppose that $\beta$ depends on $x$ 
through a purely exogenous state component (as in, say, \cite{degroot2018} and
\cite{schorfheide2018identifying}).  Specifically, $\Xsf = \Ysf \times
\Zsf$ and $x = (y, z)$, where
\begin{equation*}
    P(x, a, x') = R(y, a, y') Q(z, z')
    \quad \text{and} \quad \beta(x) = \beta(z).
\end{equation*}
Here $R(y, a, \cdot)$ is a distribution over $y$ for each feasible $(y, a)$ pair
and $Q$ is a stochastic matrix over $\Zsf$.  For each $z \in \Zsf$, let
$(Z_t(z))_{\, t \geq 0}$ be a Markov chain on $\Zsf$ generated by $Q$ and
starting at $z$. Define
\begin{equation}\label{eq:elldef}
    \eE(\beta, Q, \theta) 
    := \lim_{k \to \infty}
        \left\{
            \sup_{z \in \Zsf}
            \EE \prod_{t=0}^{k-1} \beta(Z_t(z))^\theta
        \right\}^{1/k}.
\end{equation}

We can now state the following exact result.

\begin{theorem}\label{t:eziff}
    The Epstein--Zin ADP $\aA$ is max-stable if and only if
    \begin{equation}\label{eq:ezkk}
        \eE(\beta, Q, \theta)^{1/\theta} < 1
    \end{equation}
    Hence, under \eqref{eq:ezkk}, all of the optimality results in
    Theorem~\ref{t:fbk} apply. Conversely, if \eqref{eq:ezkk} fails, then $\aA$
    is not well-posed and optimality is undefined.
\end{theorem}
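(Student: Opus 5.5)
Via Lemma~\ref{l:f} and Theorem~\ref{t:iso} (if $\gamma>0$) or Theorem~\ref{t:antiiso} (if $\gamma<0$), $\aA$ is max-stable if and only if $\hat\aA$ is max-stable (resp.\ min-stable). Both theorems also say $\aA$ is well-posed if and only if $\hat\aA$ is. So it suffices to prove the statement for $\hat\aA$.

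For $\hat\aA$, write $\hat T_\sigma v = \{r_\sigma^\alpha + \beta (P_\sigma v)^{1/\theta}\}^\theta$. The plan is to reduce each $\hat T_\sigma$ to a (possibly nonlinear) Koopmans-type operator whose fixed point exists if and only if a spectral-radius condition holds. Set $w = v^{1/\theta}$. Then $w \mapsto r_\sigma^\alpha + \beta (P_\sigma w^\theta)^{1/\theta}$, a homogeneous-of-degree-one aggregator added to a positive constant.

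The product structure $P=R\otimes Q$ with $\beta$ depending only on $z$ is what makes the growth factor of this operator independent of $\sigma$. Iterating the homogeneous part $k$ times on functions of $z$ alone gives
\[
  \Big(Q\text{-expectation of } \textstyle\prod_{t<k}\beta(Z_t)^\theta\Big)^{1/\theta}.
\]
Its $k$-th root asymptotically equals $\eE(\beta,Q,\theta)^{1/\theta}$. Note that $\eE$ is the spectral radius of the positive matrix $D_\beta^\theta Q$ (or $Q D_\beta^\theta$), by Gelfand's formula.

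\textbf{Sufficiency.} Assume \eqref{eq:ezkk}. I would show that for each $\sigma$ some iterate of $\hat T_\sigma$ is a contraction in a suitable metric. The candidate is Thompson's metric, or a weighted sup norm built from the Perron eigenvector $e$ of $D_\beta^\theta Q$. Concretely, work with the operator $A_\sigma w := \beta(P_\sigma w^\theta)^{1/\theta}$, which is monotone and positively homogeneous. Bound it by comparing against functions of the form $c\,e(z)^{1/\theta}$, using $(P_\sigma (e^{1/\theta}\circ z)^{\theta})=Qe$, which depends only on $z$. This shows $A_\sigma^k$ has "spectral radius" $\rho := \eE^{1/\theta} < 1$. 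Hence $w=r_\sigma^\alpha+A_\sigma w$ has a unique positive solution, iterates converge globally, and Example~\ref{eg:pspace} gives order stability.

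Greedy policies exist because $\Xsf$ and $\Asf$ are finite (the max is pointwise over finitely many actions; the direction flips with the sign of $\theta$ when mapping back to $v$, and that flip gives the min case). Proposition~\ref{p:fposet} then yields max- or min-stability of $\hat\aA$, and hence max-stability of $\aA$.

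\textbf{Necessity.} Suppose $\rho \ge 1$. Pick any $\sigma$ and suppose $w = r_\sigma^\alpha + A_\sigma w$ has a positive solution. Iterating gives $w \ge \sum_{t<k} A_\sigma^t r_\sigma^\alpha$. Since $r_\sigma^\alpha \ge \epsilon$, I would lower bound $A_\sigma^t \epsilon$ through the $z$-only comparison function. That lower bound is of order $\rho^t$ times the Perron eigenvector, up to constants, and irreducibility guarantees positivity everywhere. The series therefore diverges when $\rho\ge1$, contradicting finiteness of $w$. So $T_\sigma$ has no fixed point, $\aA$ is not well-posed, and in particular not max-stable.

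\textbf{Main obstacle.} The hard step is handling the nonlinear power mean $(P_\sigma w^\theta)^{1/\theta}$ for arbitrary sign of $\theta$, so that its $k$-fold iterates are controlled exactly, from both sides, by $\eE$. I would first try to reduce everything to the linear operator $D_\beta^\theta Q$ acting on $w^\theta$. When $w$ depends only on $z$ this reduction is exact, and sandwiching a general $w$ between $z$-only functions then transfers the bounds.
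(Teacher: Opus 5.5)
Your route differs from the paper's. The paper stays in the coordinates of $\hat\aA$. There $\hat T_\sigma v=\{r_\sigma^\alpha+(A_\sigma v)^{1/\theta}\}^\theta$ with $A_\sigma=\beta^\theta P_\sigma$ \emph{linear}. It computes $\rho(A_\sigma)=\eE(\beta,Q,\theta)$ (Lemma~\ref{l:soz}) and imports the dichotomy ``global stability if $\rho(A_\sigma)^{1/\theta}<1$, no fixed point otherwise'' from Theorem~3.1 of \cite{stachurski2025unique} (Lemma~\ref{l:ezos}).

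Your substitution $w=v^{1/\theta}$ undoes that linearization. Your $w$-operator $S_\sigma w:=r_\sigma^\alpha+N_\sigma w$, with $N_\sigma w:=\beta(P_\sigma w^\theta)^{1/\theta}$ (your $A_\sigma$), satisfies $S_\sigma(v^\alpha)=(T_\sigma v)^\alpha$, so the detour through $\hat\aA$ is a round trip. You then prove the dichotomy by hand. The observation that makes this possible is correct: if $e\gg0$ is the Perron vector of $D_\beta^\theta Q$ and $u:=e^{1/\theta}$ is viewed as a function of $z$, then $N_\sigma u=\rho u$ with $\rho=\eE^{1/\theta}$, for every $\sigma$ and either sign of $\theta$. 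This buys a self-contained proof. However, two steps you build on it are false on part of the parameter range. The reason is that $N_\sigma$ is a power mean, superadditive for $\theta\le 1$ and subadditive for $\theta\ge1$, and each step silently uses one of these properties.

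\textbf{Necessity.} ``Iterating gives $w\ge\sum_{t<k}A_\sigma^t r_\sigma^\alpha$'' needs $N_\sigma(x+y)\ge N_\sigma x+N_\sigma y$. Minkowski reverses this when $\theta>1$, and the inequality then genuinely fails. Take $\Zsf$ a singleton, two equiprobable $y$-states, $\theta=2$, $\beta\equiv 1/2$ and $r_\sigma^\alpha=(1,\delta)$ with $\delta$ small. The fixed point is about $r_\sigma^\alpha+0.61\cdot\mathbf 1$, whereas $\sum_{t\ge0}N_\sigma^t r_\sigma^\alpha\approx r_\sigma^\alpha+0.71\cdot\mathbf 1$.

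The fix is to pass to the eigen-ray \emph{before} iterating, since $N_\sigma$ is additive along $\{cu\}$. Choose $\epsilon>0$ with $r_\sigma^\alpha\ge\epsilon u$. If $w=r_\sigma^\alpha+N_\sigma w\gg0$ and $c^*:=\max\{c: cu\le w\}$, then $w\ge \epsilon u+N_\sigma(c^*u)=(\epsilon+\rho c^*)u$. This contradicts maximality of $c^*$ when $\rho\ge 1$.

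\textbf{Sufficiency.} The weighted-sup-norm bound $|N_\sigma w-N_\sigma w'|\le\rho\|w-w'\|_u\,u$ needs subadditivity, so it is available only for $\theta\ge1$. For $0<\theta<1$ the power mean need not even be Lipschitz on the cone, since its partial derivatives blow up at the boundary. In Thompson's metric $S_\sigma$ is only nonexpansive globally: along $cu$ with $c\to\infty$, the contraction ratio of every iterate tends to $1$. So ``some iterate is a contraction'' should be replaced by an order-interval argument that again uses only $u$, monotonicity and homogeneity.

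\begin{itemize}
\item Given $w$ and $\rho<1$, pick $0<\epsilon<c$ with $\epsilon u\le w\le cu$ and $\epsilon(1-\rho)u\le r_\sigma^\alpha\le c(1-\rho)u$. Then $S_\sigma$ maps $[\epsilon u,cu]$ into itself.
\item By continuity, $S_\sigma^k(\epsilon u)$ increases and $S_\sigma^k(cu)$ decreases to fixed points.
\item Fixed points are unique. If $w_1,w_2$ are fixed and $t:=\max\{s: sw_1\le w_2\}<1$, then $w_2\ge tw_1+(1-t)r_\sigma^\alpha\ge (t+\delta)w_1$ for some $\delta>0$, a contradiction.
\item Sandwiching gives $S_\sigma^k w\to\bar w$. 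Global stability transfers to $\hat T_\sigma$ through the homeomorphism $w\mapsto w^\theta$, and Example~\ref{eg:pspace} yields order stability.
\end{itemize}

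With these two repairs, the rest of your plan goes through: greedy policies, Proposition~\ref{p:fposet}, and Theorems~\ref{t:iso} and~\ref{t:antiiso}.
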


The sufficiency of \eqref{eq:ezkk} for optimality properties is related to an
earlier result in \cite{stachurski2021dynamic}. It is also connected to
\cite{bloise2024not}, who establish sufficient conditions for optimality via
the spectral radius of a monotone sublinear operator. The converse result is
new. Unlike the existing literature, the proof of Theorem~\ref{t:eziff}
proceeds by studying the simpler ADP $\hat \aA$ introduced in
Section~\ref{ss:modep}, which we show is either isomorphic or
anti-isomorphic, depending on the sign of $\gamma$. Then the theorem follows
from Proposition~\ref{p:fposet} together with a result on irreducible
positive operators. See Section~\ref{ss:pezr} of the appendix.

\subsection{Kreps--Porteus vs Risk Sensitive Preferences}\label{ss:rskp}

As a further application, we show that two widely used preference
specifications, the risk-sensitive preferences and a multiplicative variation
of Kreps--Porteus preferences, are connected by an isomorphism. This allows
optimality results from the well-studied risk-sensitive framework to be
transferred to the multiplicative Kreps--Porteus model.

We continue with the finite MDP framework of Example~\ref{eg:mdp}, using
$r_\sigma$ and $P_\sigma$ as in \eqref{eq:rps}. Recall the risk-sensitive
MDP in Example~\ref{eg:risksens}, where the policy operators take the form
\begin{equation}\label{eq:brs}
    (T_\sigma^\theta \, v)(x)
    = r_\sigma(x) + \frac{\beta}{\theta}
        \ln \left[
            \sum_{x'} \exp(\theta v(x')) P(x, \sigma(x), x')
        \right]
\end{equation}
for $v \in \RR^\Xsf$ and $\theta \in \RR \setminus \{0\}$. We show in
Example~\ref{eg:risksens_ms} that the ADP $(\RR^\Xsf, \{T_\sigma^\theta\})$
is max-stable and hence all of the optimality results in Theorem~\ref{t:fbk}
apply.

An alternative formulation is to replace the entropic certainty equivalent
with Kreps--Porteus expectations, leading to the policy operator
\begin{equation}\label{eq:kp}
    (K_\sigma \, v)(x)
    = r_\sigma(x) + \beta
    \left\{
        (P_\sigma \, v^\nu)(x)
    \right\}^{1/\nu}
\end{equation}
for $v \in (0, \infty)^\Xsf$ and $\nu \in \RR \setminus \{0\}$, where $r$ is
strictly positive. The Kreps--Porteus ADP $((0, \infty)^\Xsf, \{K_\sigma\})$
is harder to analyze directly because $K_\sigma$ is not generally a
contraction. There is, however, a multiplicative variation on the
Kreps--Porteus ADP that is simple to analyze. The model is obtained by
setting
\begin{equation}\label{eq:mkp}
    (M_\sigma \, v)(x)
    = r_\sigma(x) \cdot
    \left\{
        (P_\sigma \, v^\nu)(x)
    \right\}^{\beta / \nu}
\end{equation}
where $v \in (0, \infty)^\Xsf$, $r$ is strictly positive, and $\beta \in
[0, 1)$.  We call $\hat \aA := ((0, \infty)^\Xsf, \{M_\sigma\})$ the
multiplicative Kreps--Porteus (MKP) ADP.

It turns out that the MKP ADP and the risk-sensitive ADP are isomorphic. To
see this, we take logs of the Bellman max-equation associated with the MKP
ADP, obtaining
\begin{equation*}
    \ln v(x) =
    \max_{a \in \Gamma(x)}
    \left\{
        \ln r(x,a) + \frac{\beta}{\nu} \ln
        \left[
            \sum_{x'} v(x')^\nu P(x, a, x')
        \right]
    \right\}.
\end{equation*}
Setting $\hat v = \ln v$ and $\hat r = \ln r$ yields
\begin{equation*}
    \hat v(x) =
    \max_{a \in \Gamma(x)}
    \left\{
        \hat r(x,a) + \frac{\beta}{\nu} \ln
        \left[
            \sum_{x'} \exp(\nu \, \hat v(x')) P(x, a, x')
        \right]
    \right\}.
\end{equation*}
This is exactly the Bellman equation for \eqref{eq:brs}, after replacing $r$
with $\hat r$ and $\theta$ with $\nu$.

We can formalize this observation.  The MKP ADP is $\hat \aA = ((0,
\infty)^\Xsf, \{M_\sigma\})$.  Let $F \colon (0, \infty)^\Xsf \to
\RR^\Xsf$ be the pointwise logarithm, $Fv = \ln v$, which is an order
isomorphism with inverse $F^{-1} = \exp$.  Consider the risk-sensitive ADP
$\aA := (\RR^\Xsf, \{T_\sigma^\nu\})$ with reward $\hat r = \ln r$ and
risk-sensitivity parameter $\theta = \nu$.

It remains to verify condition \eqref{eq:adpih}, i.e., that $M_\sigma =
F^{-1} \circ T_\sigma^\nu \circ F$ on $(0, \infty)^\Xsf$. Setting $\hat v =
Fv = \ln v$ for $v \in (0, \infty)^\Xsf$, we have
\begin{align*}
    (F^{-1} \circ T_\sigma^\nu \circ F)(v)(x)
    &= \exp\!\left[
        \ln r_\sigma(x) + \frac{\beta}{\nu}
        \ln \left[
            \sum_{x'} \exp(\nu \, \hat v(x')) P(x, \sigma(x), x')
        \right]
    \right] \\
    &= r_\sigma(x) \cdot
    \left\{
        \sum_{x'} \exp(\nu \, \ln v(x')) P(x, \sigma(x), x')
    \right\}^{\beta/\nu} \\
    &= r_\sigma(x) \cdot
    \left\{ (P_\sigma \, v^\nu)(x) \right\}^{\beta/\nu}
    = (M_\sigma \, v)(x).
\end{align*}
Since $F$ is an order isomorphism, the MKP ADP $\hat \aA$ and the
risk-sensitive ADP $\aA$ are isomorphic.  By Theorem~\ref{t:iso}, they share
all optimality properties: $\hat \aA$ is max-stable if and only if $\aA$ is
max-stable, the value functions are related by $\hvmax = F \vmax$, and the two
ADPs have the same optimal policies.

This result provides a practical benefit.  Since the risk-sensitive MDP is
an additive model with well-developed theory
\citep{bauerle2018stochastic}, the isomorphism allows us to immediately
establish optimality results for the multiplicative Kreps--Porteus model
without the need for separate analysis.  The same isomorphism extends
naturally to continuous state space settings, where sums are replaced by
expectations and $\RR^\Xsf$ by bounded measurable functions.

\subsection{Curvature and Approximation}

Implementing dynamic programs with continuous state spaces on computers requires
approximation of value functions, policy functions, or both.  The need
for high quality approximations of these functions is amplified by the
fact that, in dynamic programming, most computational algorithms use some form
of iteration (such as value or Howard policy iteration), and approximation
errors tend to compound across iterations, potentially leading to poor final
approximations or failures of convergence (see, e.g., \cite{Farahmand2010Error}).

In general, functions are harder to approximate when they involve high degrees
of curvature.  Intuitively, isomorphic relationships can improve accuracy of
approximations and stabilize iteration
by transforming dynamic programs in order to reduce curvature or other
complexities in target functions.  We explore this idea in the current section,
using the multisector real business cycle model of \cite{long1983real} as a test case.

In their model, the Bellman max-equation has the form
\begin{equation}
    \label{eq:long_bellman}
    v(y) = \max_{c, X} 
    \left\{ 
        u(c) + \beta \int v(f(\lambda, X)) \phi(\diff \lambda)
    \right\},
\end{equation}
subject to 
\begin{equation}
    \label{eq:long_cons}
    0 \leq c_i, X_{ij}, y_i
    \quad \text{and} \quad
    c_j + \sum_{i=1}^n X_{ij} = y_j
    \quad \text{for all $i, j$ in $\{1, \ldots, n\}$}.
\end{equation}
The function $f$ is a Borel measurable production function taking values in $\RR^n_+$,
$c = (c_j)$ is an $n$-vector of consumption quantities across $n$ goods,
$y = (y_j)$ is an $n$-vector of final outputs, $X = (x_{ij})$ is an $n \times n$
matrix of commodity inputs, and $\phi$ is a distribution over $\RR_+^n$. 
(The labor-leisure decision is missing from \eqref{eq:long_bellman} because we
adopt a special case of the parameterization in \cite{long1983real} that allows
us to assume inelastic labor supply.)

Let the state space be $\Ysf = \RR^n_+$ and let $V = m\RR^\Ysf$, the space of
(extended) real-valued Borel measurable functions on $\Ysf$. The policy operator is
given by
\begin{equation*}
    (T_\sigma v)(y) = u(\sigma_c(y)) + \beta \int v\left[
        f(\lambda, \sigma_X(y))
    \right] \phi(d\lambda) \qquad (y \in \Ysf)
\end{equation*}
where $v \in V$ and $\sigma = (\sigma_c, \sigma_X)$ is a Borel measurable
feasible policy such that $(c, X) = \sigma(y)$ satisfies the constraints in
\eqref{eq:long_cons}. Then the model represented by the pair $(V,
\{T_\sigma\})$ is an ADP. In many settings, such as the one we will study
below, this program is well-posed and max-stable. In view of
Theorem~\ref{t:iso}, any isomorphic program shares the same optimality and
convergence properties of the original problem. Therefore, we can
equivalently solve the transformed Bellman equation
\begin{equation}\label{eq:long_bellman_c}
    w(y) = \max_{c, X} 
    F\left[
        u(c) + \beta \int (F^{-1}\circ w)(f(\lambda, X))
        \phi(\diff \lambda) 
    \right]
\end{equation}
for any strictly increasing function $F\colon \RR \to \RR$, and recover the
value function of the original problem via $v = F^{-1} w$. (We understand $F$
as both a strictly increasing function from $\RR$ to itself and as an order
isomorphism on $V$ sending $v$ into $y \mapsto F(v(y))$.) This is
advantageous because, for suitably chosen $F$, solving
\eqref{eq:long_bellman_c} significantly improves numerical accuracy.

To illustrate this point, we follow
\cite{long1983real} in assuming that utility obeys $U(c) = \sum_{i=1}^n \theta_i
\ln c_i$ and production has the multisector Cobb-Douglas form $f(y,
\lambda) = \lambda \prod_{j=1}^n x_{ij}^{a_{ij}}$. Under these assumptions, an
analytical solution to
\eqref{eq:long_bellman} exists, allowing us to compare approximate numerical
solutions with exact solutions.  The analytical solution has the form $v_0(y) = \sum_{i=1}^n \gamma_i\ln y_i +
K$, where $\gamma_i$ and $K$ depend on the parameters $a_{ij}$ and $\theta_i$. 
This function has relatively high curvature near zero, with both slope and
curvature approaching infinity as output converges to the origin. These features
make it challenging to approximate with a high degree of accuracy. In contrast, if we consider an isomorphic
transformation \eqref{eq:long_bellman_c} with $F(x) = e^{mx+b}$, then the
transformed value function is given by
\begin{equation}
    \label{eq:w}
    (F v_0)(y) = e^{mK+b} \prod_{i=1}^n y_i^{m\gamma_i}.
\end{equation}
This function has lower curvature and finite derivatives at zero. These
properties facilitate approximation and lead to improved accuracy.

To demonstrate, we solve \eqref{eq:long_bellman} and
\eqref{eq:long_bellman_c} for the two-sector case under several
parameterizations using value function iteration where the value functions
are approximated by linear interpolation on a common grid. To ensure a fair
comparison, we perform a fixed number of iterations ($N = 500$) for both the
original and transformed problems. The maximization operations in
\eqref{eq:long_bellman} and \eqref{eq:long_bellman_c} are carried out using
the Adam optimizer \citep{kingma2014adam} for efficient computation. In each
maximization operation, we run the optimizer for 2000 iterations following a
cosine decay learning rate schedule with a 500-step warmup. The initial,
peak, and final learning rates of the schedule are $1\times 10^{-6}$,
$1\times 10^{-4}$, and $1\times 10^{-7}$, respectively, chosen through trial
and error. We then compute the sup-norm errors for the two problems, $\|v -
v_0\|$ and $\|F^{-1} w - v_0\|$. Table~\ref{tab:long} shows the improvement
in accuracy from solving the transformed problem, measured by $\|v -
v_0\|/\|F^{-1}\circ w - v_0\|$.\footnote{For each parameterization, we choose
  the parameters of the exponential transformation $F(x) = e^{mx+b}$ so that
  $F v_0$ is approximately linear in at least one dimension. In particular,
  in the second column, $m = [0.35, 0.3, 0.3, 0.25, 0.25, 0.25]$ and $b =
  [75, 60, 65, 60, 60, 75]$; in the third column, $m = [0.9, 0.8, 0.7, 0.7,
  0.6, 0.5]$ and $b = [90, 85, 80, 85, 80, 75]$. In all cases, the state
  space is $[1\times 10^{-7}, 20]^2$ with 500 grid points along each
  dimension. The other parameter values are listed in the table.}

\begin{table}[tb!]
    \caption{Accuracy Gain from Solving the Transformed Problem}
    \label{tab:long}
    \footnotesize
    \begin{tabular*}{\textwidth}{@{} l @{\extracolsep\fill} c @{\extracolsep\fill}  c @{}}
      \toprule
      Parameterization & $A = ([0.2, 0.7], [0.6, 0.1])$ & $A = ([0.1, 0.7],
                                                          [0.3, 0.1])$\\ 
      \cmidrule{2-3}
       & $\theta_2 = 1.0$ & $\theta_2 = 1.0$ \\
      \midrule
      $\theta_1 = 0.5$ & 104.6 & 68.8 \\
      $\theta_1 = 0.6$ & 153.6 & 65.3 \\
      $\theta_1 = 0.7$ & 143.6 & 63.1 \\
      $\theta_1 = 0.8$ & 109.7 & 62.6 \\
      $\theta_1 = 0.9$ & 173.3 & 178.9 \\
      $\theta_1 = 1.0$ & 178.7 & 122.3 \\
      \bottomrule
    \end{tabular*}
\end{table}

Although the transformed value function $w$ is not exactly linear for the
two-sector case (see \eqref{eq:w}), we can make it approximately linear in at
least one dimension by choosing $F$ appropriately (e.g., setting $m =
1/\gamma_1$). Table~\ref{tab:long} shows that this improved approximation of
the value function leads to solutions that are, on average, 100 times more
accurate across a wide range of parameterizations.

\section{Conclusion}\label{sec:concl}

We studied isomorphic and anti-isomorphic relationships between dynamic
programs and showed how optimality properties transmit across these
relationships. We applied these ideas to Epstein--Zin preferences with time
preference shocks, to the connection between multiplicative Kreps--Porteus
and risk-sensitive preferences, and to improving numerical accuracy in
multisector real business cycle models.

Our research focused on discrete time dynamic programs, which we paired with
discrete time concepts of topological and order conjugacy.
Continuous time dynamic programs are also important and the concept of
topological conjugacy has a well-defined analogy in continuous time dynamics.
This suggests that many of our ideas can be carried over to continuous time
systems.  We leave this task for future research.

\appendix

\section{Order-Theoretic Preliminaries}\label{app:order}

This appendix collects the order-theoretic definitions, dynamical systems
background, and auxiliary results that underpin the main text.

\subsection{Posets}\label{app:posets}

Let $V = (V, \preceq)$ be a partially ordered set, also called a poset. When $A
\subset V$, the symbol $\bigvee A$ refers to the supremum of $A$ in $V$ and
$\bigwedge A$ is the infimum  (see, e.g., \cite{davey2002introduction}).
$V$ is called \navy{bounded} if $V$ has a least and greatest element. A
sequence $(v_n)$ in $V = (V, \preceq)$ is called \navy{increasing} if $v_n
\preceq v_{n+1}$ for all $n \in \NN$.  If $(v_n)$ is increasing and $\bigvee_n
v_n = v$ for some $v \in V$, then we write $v_n \uparrow v$.  The set $V$ is
called \navy{countably chain complete} if $V$ is bounded and every increasing
sequence in $V$ has a supremum in $V$.   A self-map $S$ on $V$ is called
\navy{order continuous} on $V$ if $Sv_n \uparrow Sv$ whenever $v_n \uparrow v$.

\subsection{Dynamical Systems}\label{app:ds}

A \navy{dynamical system} is a pair $(V, S)$ where $V$ is a set and $S$ is a
self-map on $V$.  Given $v \in V$, the sequence $(S^n v)_{n \geq 1}$ is called
the \navy{trajectory} of $v$ under $S$. When $V$ has a topology, we can study
the convergence of trajectories.  In particular, in this setting, a system $(V,
S)$ is called \navy{globally stable} when $S$ has a unique fixed point $\bar v$
in $V$ and $\lim_{k \to \infty} S^k v = \bar v$ for all $v \in V$.

Two dynamical systems $(V, S)$ and $(\hat V, \hat S)$ are called
\navy{conjugate} under $F$ if there exists a bijection $F$ from $V$ to $\hat V$
such that $F \circ S = \hat S \circ  F$ on $V$.  In this setting, if $v$ is a
fixed point of $S$, then $\hat S F v = F S v = F v$, so $Fv$ is a fixed point of
$\hat S$.  With slightly more effort, one can show the following.

\begin{lemma}\label{l:foo000}
    If $(V, S)$ and $(\hat V, \hat S)$ are conjugate under $F$ and
    $S$ has a unique fixed point $v \in V$, then $Fv$ is the unique fixed
    point of $\hat S$ in $\hat V$.
\end{lemma}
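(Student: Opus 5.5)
The plan is to prove existence and uniqueness separately; both steps use only the conjugacy identity $F \circ S = \hat S \circ F$ and the fact that $F$ is a bijection, so no order structure is needed. The text just before the statement already gives existence: since $Sv = v$, we get $\hat S(Fv) = F(Sv) = Fv$, so $Fv$ is a fixed point of $\hat S$ in $\hat V$.

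For uniqueness, I would first turn the conjugacy around. Composing $F \circ S = \hat S \circ F$ with $F^{-1}$ on the left and on the right gives
\begin{equation*}
    S \circ F^{-1} = F^{-1} \circ \hat S
    \quad \text{on } \hat V .
\end{equation*}
Thus $(\hat V, \hat S)$ and $(V, S)$ are conjugate under $F^{-1}$.

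Next, take any fixed point $\hat w$ of $\hat S$ in $\hat V$. By the identity above,
\begin{equation*}
    S(F^{-1}\hat w) = F^{-1}(\hat S \hat w) = F^{-1}\hat w ,
\end{equation*}
so $F^{-1}\hat w$ is a fixed point of $S$. Because $v$ is the unique fixed point of $S$, this forces $F^{-1}\hat w = v$. Applying $F$ to both sides gives $\hat w = Fv$. Hence $Fv$ is the only fixed point of $\hat S$.

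There is no real obstacle. The one step needing care is reversing the conjugacy, which relies on $F$ being a bijection so that $F^{-1}$ exists on all of $\hat V$.
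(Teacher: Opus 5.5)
Your proof is correct and matches the paper's approach. The paper shows existence exactly as you do ($\hat S F v = F S v = F v$) and leaves uniqueness to ``slightly more effort''; that effort is your step of rewriting the conjugacy as $S \circ F^{-1} = F^{-1} \circ \hat S$ and pulling any fixed point of $\hat S$ back to $V$, the same $F^{-1}$-pullback the paper uses in the proof of Lemma~\ref{l:ocos}.
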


For more details see, for example, \cite{sternberg2014dynamical} or \cite{layek2015introduction}.

Dynamical systems $(V, S)$ and $(\hat V, \hat S)$ are called \navy{topologically
conjugate} under $F$ if both $V$ and $\hat V$ have topologies, the pairs $(V,
S)$ and $(\hat V, \hat S)$ are conjugate under $F$ and, in addition, $F$ is a
homeomorphism (i.e., continuous bijection with continuous inverse) from $V$ to
$\hat V$. In this setting we can extend Lemma~\ref{l:foo000} as follows.

\begin{lemma}\label{l:foo00}
    If $(V, S)$ and $(\hat V, \hat S)$ are topologically conjugate under $F$,
    then $(V, S)$ is globally stable if and only if $(\hat V, \hat S)$ is
    globally stable.
\end{lemma}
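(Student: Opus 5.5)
We prove the two directions of Lemma~\ref{l:foo00} separately; by symmetry (since $F^{-1}$ is also a homeomorphism and $(\hat V, \hat S)$ and $(V, S)$ are conjugate under $F^{-1}$, because $F \circ S = \hat S \circ F$ gives $S \circ F^{-1} = F^{-1} \circ \hat S$), it suffices to show that global stability of $(V, S)$ implies global stability of $(\hat V, \hat S)$.

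\textbf{Uniqueness of the fixed point.} Suppose $(V,S)$ is globally stable with unique fixed point $\bar v$. Lemma~\ref{l:foo000} gives immediately that $F \bar v$ is the unique fixed point of $\hat S$ in $\hat V$. For completeness: $\hat S F\bar v = F S \bar v = F \bar v$, and if $\hat w$ is any fixed point of $\hat S$, then $w := F^{-1}\hat w$ satisfies $S w = F^{-1} \hat S F w = F^{-1}\hat w = w$, so $w = \bar v$ and $\hat w = F\bar v$.

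\textbf{Convergence of trajectories.} First verify by induction that $\hat S^k = F \circ S^k \circ F^{-1}$ for all $k \in \NN$. The case $k=1$ is the conjugacy relation. The step is $\hat S^{k+1} = \hat S \circ F \circ S^k \circ F^{-1} = F \circ S^{k+1} \circ F^{-1}$. Now fix $\hat v \in \hat V$ and set $v := F^{-1}\hat v$. Global stability of $(V,S)$ gives $S^k v \to \bar v$. Since $F$ is continuous, it preserves limits of sequences (this holds for any continuous map, so sequential continuity suffices). Hence $\hat S^k \hat v = F(S^k v) \to F\bar v$, which is the unique fixed point of $\hat S$. This establishes global stability of $(\hat V, \hat S)$.

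There is no serious obstacle. The only points needing care are the iterate identity $\hat S^k = F S^k F^{-1}$, and noting that the forward direction uses only continuity of $F$ while the reverse direction uses continuity of $F^{-1}$. This is precisely why the lemma assumes a homeomorphism rather than a mere continuous bijection.
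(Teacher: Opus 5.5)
Your proof is correct. The paper does not actually prove Lemma~\ref{l:foo00}; it only calls the result standard and cites textbooks. Your argument (unique fixed point via Lemma~\ref{l:foo000}, then the iterate identity $\hat S^k = F \circ S^k \circ F^{-1}$ with continuity of $F$, then the converse by symmetry using $F^{-1}$) is the standard one, and it is the topological analogue of the paper's proof of Lemma~\ref{l:ocos}, with continuity of $F$ in place of order preservation.
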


Lemma~\ref{l:foo00} is standard and routinely used in dynamical systems theory.
(There are similar results concerning local stability for topologically
conjugate systems, as well as additional results concerning other kinds of
asymptotic behavior.  See, for example, \cite{sternberg2014dynamical} or
\cite{layek2015introduction}.)

\subsection{Order Stability}\label{app:os}

Let $(V, S)$ be a dynamical system where $V$ is partially ordered by $\preceq$
and $S$ has a unique fixed point $\bar v$ in $V$.  In this environment,
\cite{sargent2025partially} call $(V, S)$
\begin{enumerate}
    \item \navy{upward stable} if $v \in V$ and $v \preceq S \, v$  implies $v
        \preceq \bar v$,
    \item \navy{downward stable} if $v \in V$ and $S \, v \preceq v$ implies $\bar
        v \preceq v$, and
    \item \navy{order stable} if upward and downward stability both hold.
\end{enumerate}

In this paper, we discuss both maximization and minimization.  To link these
ideas in a poset environment, we use the notion of order duals.  In particular,
given poset $V = (V, \preceq)$, let $(V, \preceq^\partial)$ be the
order dual, so that, for $u, v \in V$, we have $u \preceq^\partial v$ if and
only if $v \preceq u$. For convenience, we sometimes
denote $(V, \preceq^\partial)$ by $V^\partial$.

\begin{lemma}\label{l:odod}
    $(V, S)$ is order stable if and only if $(V^\partial, S)$ is order stable.
\end{lemma}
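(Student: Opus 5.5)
The argument is a direct unwinding of the definitions; no earlier result is needed beyond the definition of the order dual. The first step is to note that the fixed-point hypothesis does not depend on the order. The map $S$ is the same self-map in $(V,S)$ and in $(V^\partial,S)$, so $S$ has a unique fixed point $\bar v$ in one setting exactly when it has one in the other, and it is the same point. Hence order stability is meaningful for both systems simultaneously, and we compare the two stability conditions with this common $\bar v$.

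Next I show that upward stability of $(V^\partial,S)$ is equivalent to downward stability of $(V,S)$. Upward stability of $(V^\partial,S)$ states that $v \preceq^\partial Sv$ implies $v \preceq^\partial \bar v$. By the definition of $\preceq^\partial$, $v \preceq^\partial Sv$ means $Sv \preceq v$, and $v \preceq^\partial \bar v$ means $\bar v \preceq v$. So the condition reads: $Sv \preceq v$ implies $\bar v \preceq v$. This is exactly downward stability of $(V,S)$.

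The symmetric translation handles the remaining pair. Downward stability of $(V^\partial,S)$, namely that $Sv \preceq^\partial v$ implies $\bar v \preceq^\partial v$, rewrites as: $v \preceq Sv$ implies $v \preceq \bar v$. That is upward stability of $(V,S)$.

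Combining the two equivalences, $(V^\partial,S)$ is both upward and downward stable if and only if $(V,S)$ is both downward and upward stable, which proves the lemma. As a consistency check, $(V^\partial)^\partial = V$, so the statement is symmetric. I expect no real obstacle here; the only point needing care is the first step, namely that the unique fixed point is order-independent and is therefore shared by both systems.
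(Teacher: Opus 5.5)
Your proof is correct and follows the paper's argument. Both unwind $\preceq^\partial$ so that upward stability of $(V^\partial,S)$ becomes downward stability of $(V,S)$, and vice versa, with the same fixed point $\bar v$. The only cosmetic difference is that you state the translations as equivalences, whereas the paper proves sufficiency and obtains necessity from $(V^\partial)^\partial = V$.
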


\begin{proof}[Proof of Lemma~\ref{l:odod}]\label{pf:odod}
    Let $(V, S)$ be order stable.  By definition, $S$ has a unique fixed point
    $\bar v \in V$. We claim that $(V^\partial, S)$ is upward
    and downward stable. Regarding upward stability, suppose $v \in V$ and $v
    \preceq^\partial S v$.  Then $Sv \preceq v$ and hence $\bar v \preceq v$, by
    downward stability of $(V, S)$.  But then $v \preceq^\partial \bar v$, so
    $(V^\partial, S)$ is upward stable.  The proof of downward stability is
    similar.  Hence order stability of $(V, S)$ is sufficient for order
    stability of $(V^\partial, S)$. Necessity follows from sufficiency, since
    the dual of $(V^\partial, S)$ is $(V, S)$.
\end{proof}

\subsection{Order Conjugacy}\label{app:oc}

Our next step is to replace the concept of topological conjugacy from
Section~\ref{app:ds} with a parallel notion for dynamical systems on posets.
To this end, we recall that a bijective map $F \colon V \to \hat V$ is called
an \navy{order isomorphism} if both $F$ and its inverse $F^{-1}$ are order
preserving. (We call $F \colon V \to \hat V$ an \navy{order anti-isomorphism}
if both $F$ and $F^{-1}$ are order reversing.) Let $(V, S)$ and $(\hat V,
\hat S)$ be two dynamical systems where $V$ and $\hat V$ are partially
ordered. We call $(V, S)$ and $(\hat V, \hat S)$ \navy{order conjugate} under
$F$ when $(V, S)$ and $(\hat V, \hat S)$ are conjugate under $F$ and, in
addition, $F$ is an order isomorphism. It is easy to verify that order
conjugacy is an equivalence relation on the set of dynamical systems over
partially ordered sets.

\begin{lemma}\label{l:ocos}
    If $(V, S)$ and $(\hat V, \hat S)$ are order conjugate under $F$,
    then $(V, S)$ is order stable if and only if $(\hat V, \hat S)$ is order stable.
\end{lemma}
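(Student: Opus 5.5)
By symmetry it suffices to show that order stability of $(V, S)$ implies order stability of $(\hat V, \hat S)$. Order conjugacy is symmetric: if $F$ conjugates $S$ to $\hat S$, then $F^{-1}$ is an order isomorphism from $\hat V$ to $V$ and $F^{-1} \circ \hat S = S \circ F^{-1}$. Composing $F \circ S = \hat S \circ F$ with $F^{-1}$ on both sides gives this identity, so the converse direction follows by swapping roles.

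So suppose $(V, S)$ is order stable with unique fixed point $\bar v$. First, Lemma~\ref{l:foo000} shows that $F \bar v$ is the unique fixed point of $\hat S$ in $\hat V$, so the notion of order stability for $(\hat V, \hat S)$ is well-defined with reference point $F\bar v$.

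For upward stability, take $\hat v \in \hat V$ with $\hat v \preceq \hat S \hat v$, and set $v := F^{-1} \hat v$. Since $F^{-1}$ is order-preserving, we have $v = F^{-1}\hat v \preceq F^{-1} \hat S \hat v$. The conjugacy relation gives $F^{-1} \hat S \hat v = F^{-1} \hat S F v = F^{-1} F S v = S v$. Hence $v \preceq S v$, and upward stability of $(V, S)$ yields $v \preceq \bar v$. Applying the order-preserving map $F$ then gives $\hat v = F v \preceq F \bar v$, which is the desired conclusion.

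Downward stability is identical with inequalities reversed: $\hat S \hat v \preceq \hat v$ implies $S v \preceq v$, hence $\bar v \preceq v$, hence $F\bar v \preceq \hat v$. There is no real obstacle. The only point requiring care is the identity $F^{-1}\circ\hat S = S \circ F^{-1}$, which is where we use that $F$ is a bijection and not merely order-preserving, together with the fact that $F^{-1}$ is itself order-preserving.
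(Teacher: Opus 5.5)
Your proof is correct and essentially identical to the paper's: both invoke Lemma~\ref{l:foo000} to get $F\bar v$ as the unique fixed point of $\hat S$, then pull the inequality back through the order-preserving $F^{-1}$ using $F^{-1}\circ\hat S = S\circ F^{-1}$, apply stability of $(V,S)$, and push forward with $F$, obtaining the converse by symmetry. Your explicit verification that $F^{-1}$ conjugates $\hat S$ back to $S$ is a small detail the paper leaves implicit.
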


\begin{proof}[Proof of Lemma~\ref{l:ocos}]\label{pf:ocos}
    Let $(V, S)$ and $(\hat V, \hat S)$ be order conjugate under $F$,
    and suppose that $(V, S)$ is order stable.  By
    Lemma~\ref{l:foo000}, the map $\hat S$ has a unique fixed point in $\hat V$.
    Let $\hat w$ be an element of $\hat V$
    satisfying $\hat S\hat w \preceq \hat w$. Let $v$ and $\hat v := Fv$ be
    the fixed points of $S$ and $\hat S$, respectively. Then $F^{-1}
    \hat S \hat w \preceq F^{-1} \hat w$ and hence $S F^{-1} \hat w \preceq
    F^{-1} \hat w$.  But then $v \preceq F^{-1} \hat w$, by downward stability
    of $(V, S)$. Applying $F$ gives $\hat v \preceq \hat w$. Hence $(\hat V,
    \hat S)$ is downward stable. Similarly, if $\hat w$ is an element of $\hat
    V$ satisfying $\hat w \preceq \hat S\hat w$, then $F^{-1} \hat w \preceq
    F^{-1} \hat S \hat w = S F^{-1} \hat w$.  By upward stability of $(V, S)$,
    we have $F^{-1} \hat w \preceq v$. Applying $F$ gives $\hat w \preceq \hat
    v$, so $(\hat V, \hat S)$ is upward stable. Together, these results show
    that $(\hat V, \hat S)$ is order stable.
    The converse implication follows from symmetry.
\end{proof}

\section{Additional Examples}\label{app:qfac}

The following examples extend the ADP framework of Section~\ref{s:adps} to
$Q$-factor formulations from the reinforcement learning literature. All
primitives are as in Example~\ref{eg:mdp}.

\begin{example}[$Q$-factors]\label{eg:qfac}
    The $Q$-learning literature studies the $Q$-factor Bellman equation (see, e.g.,
    \cite{kochenderfer2022algorithms}, Section~17.2), which is given by
    \begin{equation}\label{eq:qbell}
        f(x, a) = r(x, a) +
        \beta \sum_{x'} \max_{a' \in \Gamma(x')} f(x', a')
            P(x, a, x')
            \qquad ((x,a) \in \Gsf).
    \end{equation}
    Here $f \in \RR^\Gsf$ is called the \navy{$Q$-factor}. The
    policy operators over $Q$-factors take the form
    \begin{equation}\label{eq:pabop}
        (Q_\sigma \, f)(x, a) = r(x, a) +
            \beta \sum_{x'} f(x', \sigma(x'))
            P(x, a, x'),
    \end{equation}
    where $f \in \RR^\Gsf$ and $\sigma \in \Sigma$.
    If we pair $\RR^\Gsf$ (the set of all real-valued functions on $\Gsf$)
    with the pointwise partial order $\leq$ and $Q_\sigma$ as in
    \eqref{eq:pabop}, then $(\RR^\Gsf, \{Q_\sigma\})$ is an ADP.
\end{example}

\begin{example}[Risk-sensitive $Q$-factors]\label{eg:rsqfac}
    It has become popular in reinforcement learning and related fields
    to extend the $Q$-factor approach from Example~\ref{eg:qfac} to
    risk-sensitive decision processes (see, e.g., \cite{fei2021exponential}). The
    corresponding $Q$-factor Bellman equation is given by
    \begin{equation}\label{eq:rsqbell}
        f(x, a) =
            r(x, a) + \frac{\beta}{\theta}
            \ln
            \left\{
                    \sum_{x'} \exp \left[ \theta \max_{a' \in \Gamma(x')} f(x', a')
                \right] P(x, a, x')
            \right\}
            \qquad ((x,a) \in \Gsf).
    \end{equation}
    The policy operators over risk-sensitive $Q$-factors take the form
    \begin{equation}\label{eq:rspbell}
        (Q_\sigma^\theta \, f)(x, a) =
            r(x, a) + \frac{\beta}{\theta}
            \ln \left[
                    \sum_{x'} \exp \left[ \theta f(x', \sigma(x'))
                \right] P(x, a, x')
            \right]
    \end{equation}
    where $f \in \RR^\Gsf$ and $\sigma \in \Sigma$.   The pair $(\RR^\Gsf,
    \{Q_\sigma^\theta\})$ is an ADP.
\end{example}

\begin{example}[Exponential risk-sensitive $Q$-factors]\label{eg:fei}
    \cite{fei2021exponential} work with an ``exponential'' risk-sensitive
    $Q$-factor Bellman equation where the corresponding policy operators have
    the form
    \begin{equation*}
        (M_\sigma \, h)(x, a)
        = \exp
        \left\{
            \theta r(x, a) + \beta \ln
            \left[
                \sum_{x'} h(x', \sigma(x')) P(x, a, x')
            \right]
        \right\}
    \end{equation*}
    Here $h \in (0, \infty)^\Gsf$ and $M_\sigma$ maps $(0, \infty)^\Gsf$ into itself.
    All primitives are as in the risk-sensitive $Q$-factor ADP $\aA :=
    (\RR^\Gsf, \{Q_\sigma\})$ in Example~\ref{eg:rsqfac}. Since each $M_\sigma$
    is order preserving (under the usual pointwise order), the pair $\hat \aA :=
    ((0, \infty)^\Gsf, \{M_\sigma\})$ is also an ADP.
    If we take $F$ to be the bijection from $\RR^\Gsf$ to $(0, \infty)^\Gsf$ defined
    by $(Fh)(x, a) = \exp(\theta h(x,a))$, then, for $Q_\sigma$ defined in
    \eqref{eq:rspbell}, $h \in \RR^\Gsf$ and $(x, a) \in \Gsf$,
    \begin{equation*}
        (F Q_\sigma \, h)(x, a)
        = \exp
        \left\{
            \theta r(x, a) + \beta
            \ln \left[
                    \sum_{x'} \exp \left[ \theta h(x', \sigma(x'))
                \right] P(x, a, x')
            \right]
        \right\}.
    \end{equation*}
    This is equal to $(M_\sigma F h)(x, a)$, which shows that $F \circ Q_\sigma = M_\sigma
    \circ F$ on $\RR^\Gsf$.  As a consequence, $\aA$ and $\hat \aA$ are isomorphic
    when $\theta > 0$ and anti-isomorphic when $\theta < 0$.
\end{example}

\section{Remaining Proofs}

\subsection{Proof of Optimality Results}

Let $\aA = (V, \{T_\sigma\})$ be an ADP.  When max-greedy policies exist, we let
$\tmax$ be the Bellman max-operator and $\Hmax$ be the Howard max-operator.
As above, we denote the greatest element of $V_\Sigma$ by $\vmax$ whenever it exists.

\begin{lemma}\label{l:hmu}
    If every $v \in V$ has at least one max-greedy policy, then the following
    statements are true:
    \begin{enumerate}
        \item $\Hmax$ obeys $v_\sigma \preceq \Hmax \, v_\sigma$ for all $\sigma \in \Sigma$.
        \item If $\sigma \in \Sigma$ and $T v_\sigma = v_\sigma$, then $\vmax$
            exists in $V$ and $v_\sigma = \vmax$.
        \item If $v \in V$ and $\Hmax \, v =v$, then $v = \vmax$ and $\tmax \, \vmax = \vmax$.
        \item If $v \in V$ and $\Sigma$ is finite, then $\vmax$ exists,
            $\Hmax \, \vmax = \vmax$ and 
            $(\Hmax^k v)_{k\geq 0}$ converges to $\vmax$ in finitely many steps.
    \end{enumerate}
\end{lemma}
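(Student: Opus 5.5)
The argument works under the standing assumptions implicit in the definition of $\Hmax$: $\aA$ is well-posed, so each $v_\sigma$ exists, and order stable. We also read $T$ in (ii) as $\tmax$. Existence of max-greedy policies makes $\tmax$ well defined, with $\tmax\, v = T_\sigma\, v$ for any $v$-max-greedy $\sigma$. The plan is to prove (i)--(iv) in order, each item feeding the next, using only upward and downward stability of the individual policy operators.

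For (i), fix $\sigma$ and let $\tau$ be the designated $v_\sigma$-max-greedy policy, so that $\Hmax\, v_\sigma = v_\tau$. Greediness gives $v_\sigma = T_\sigma v_\sigma \preceq T_\tau v_\sigma$. Upward stability of $(V,T_\tau)$ then yields $v_\sigma \preceq v_\tau$.

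For (ii), suppose $\tmax v_\sigma = v_\sigma$ and let $\tau \in \Sigma$ be arbitrary. By definition of $\tmax$ as a supremum, $T_\tau v_\sigma \preceq \tmax v_\sigma = v_\sigma$. Downward stability of $(V,T_\tau)$ gives $v_\tau \preceq v_\sigma$. Hence $v_\sigma$ is the greatest element of $V_\Sigma$, so $\vmax$ exists and equals $v_\sigma$.

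For (iii), suppose $\Hmax v = v$. Then $v = v_\sigma$, where $\sigma$ is $v$-max-greedy. It follows that $\tmax v = T_\sigma v = T_\sigma v_\sigma = v_\sigma = v$. Applying (ii) gives $v = \vmax$, and also $\tmax \vmax = \vmax$.

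For (iv), set $w_k := \Hmax^k v$. For $k \geq 1$ every $w_k$ lies in $V_\Sigma$, so by (i) the sequence $(w_k)_{k\ge1}$ is increasing in the finite set $V_\Sigma$. This is the only step needing care. If $w_j = w_k$ for some $j<k$, then $w_j \preceq w_{j+1} \preceq w_k = w_j$, and antisymmetry forces $w_{j+1}=w_j$. Thus the sequence cannot revisit a value without having become constant. Because $V_\Sigma$ is finite, some value must recur, so there is a $K$ with $w_{K+1} = w_K$, i.e.\ $\Hmax w_K = w_K$. By (iii), $w_K = \vmax$, so $\vmax$ exists and $\Hmax\vmax=\vmax$. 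Consequently $w_k = \vmax$ for all $k \ge K$, which is convergence in finitely many steps.
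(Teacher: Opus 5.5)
Your proof is correct and follows the paper's argument step for step: upward stability of $T_\tau$ gives (i), downward stability gives (ii), (iii) reduces to (ii), and (iv) uses monotonicity of the Howard iterates in the finite set $V_\Sigma$, with the same reading of $T$ in (ii) as the Bellman max-operator and the same implicit reliance on order stability. Your antisymmetry argument in (iv), showing that an increasing sequence cannot revisit a value without becoming constant, spells out a step that the paper covers only with its parenthetical remark about an infinite sequence of distinct points.
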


\begin{proof}
   As for (i), fix $\sigma \in \Sigma$ and let $\tau$ be such that $\Hmax  v_\sigma =
    v_{\tau}$. Since $\tau$ is $v_\sigma$-greedy, we have $v_\sigma = T_\sigma
    \, v_\sigma \preceq \tmax \, v_\sigma = T_\tau \, v_\sigma $. Upward stability of
    $T_\tau$ gives $v_{\sigma} \preceq v_\tau = \Hmax  v_\sigma$.

    As for (ii),
    suppose $\sigma \in \Sigma$ and $\tmax \, v_\sigma = v_\sigma$.
    Fix $\tau \in \Sigma$ and note that $v_\sigma = \tmax \, v_\sigma \succeq T_\tau
    \, v_\sigma$. Downward stability of $T_\tau$ implies $v_\sigma
    \succeq v_\tau$.  Since $\tau \in \Sigma$ was arbitrary, $v_\sigma =
    \vmax$. 
    
    As for (iii), fix $v \in V$ with $\Hmax \, v = v$ and let $\sigma$ be such that
    $\Hmax \, v = v_\sigma$.  Then $v_\sigma = v$, and, 
    since $\sigma$ is $v$-max-greedy, $T_\sigma \, v = \tmax \, v$.  But then
    $T_\sigma \, v_\sigma = \tmax \, v_\sigma$, and,
    since $v_\sigma = T_\sigma \, v_\sigma$, we have $v_\sigma = \tmax \,
    v_\sigma$. Part (ii) now implies $v = v_\sigma =
    \vmax$.   This proves the first claim.  Regarding the second, substituting
    $v_\sigma = \vmax$ into $v_\sigma = \tmax \, v_\sigma$ yields $\vmax = \tmax
    \,\vmax$.

    For (iv), it suffices to show that $\Hmax \,  \vmax = \vmax$ and there
    exists a $K \in \NN$ such that $\Hmax^K v = \vmax$.  To this end,
    let $v_k = \Hmax^k v$ and note that $v_k \in V_\Sigma$ for all $k \geq 1$.
    Part (i) implies that $v_{k+1} \succeq v_k$ for all $k \in \NN$.
    Since the sequence $(v_k)$ is contained in the finite set
    $V_\Sigma$, it must be that $v_{K+1} = v_K$ for
    some $K \in \NN$ (since otherwise $V_\Sigma$ contains an infinite sequence
    of distinct points).
    But then $\Hmax \, v_K = v_{K+1} = v_K$, so $v_K$ is a fixed point of
    $\Hmax$. Part (iii) now implies that $v_K =\vmax$.
\end{proof}

\begin{proof}[Proof of Proposition~\ref{p:fposet}]\label{pp:fposet}
    If $\aA$ is an ADP such that max-greedy policies exist
    and $\Sigma$ is finite, then, by (iii)--(iv) of Lemma~\ref{l:hmu},
    the point $\vmax$ is a fixed point of $\tmax$.  This proves the max version of
    Proposition~\ref{p:fposet}.  The proof of the min version is analogous.
\end{proof}

\begin{lemma}\label{l:fpeibe}
    If $\aA$ is max-stable, then the following statements hold.
    \begin{enumerate}
        \item $V_\Sigma$ has a greatest element $\vmax$ and
        \item $\vmax$ is the unique fixed point of $\tmax$ in $V$.
        \item a policy is max-optimal if and only if it is $\vmax$-max-greedy.
        \item at least one optimal policy exists.
    \end{enumerate}
\end{lemma}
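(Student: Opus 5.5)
The plan is to work from the fixed point of $\tmax$ that max-stability provides, and to lean on Lemma~\ref{l:hmu}. Max-stability gives three ingredients: order stability of each $T_\sigma$, existence of max-greedy policies for every $v \in V$, and at least one fixed point $\bar v$ of $\tmax$. Well-posedness is built into order stability, since each $T_\sigma$ is assumed to have a unique fixed point $v_\sigma$. The key step is to show that every fixed point $\bar v$ of $\tmax$ actually lies in $V_\Sigma$ and equals its greatest element.

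To do this, fix a fixed point $\bar v$ of $\tmax$ and let $\sigma$ be $\bar v$-max-greedy. Then $T_\sigma \bar v = \tmax \bar v = \bar v$, so $\bar v$ is a fixed point of $T_\sigma$. By uniqueness, $\bar v = v_\sigma$. Hence $\tmax v_\sigma = v_\sigma$, and Lemma~\ref{l:hmu}(ii) shows that $\vmax$ exists and equals $v_\sigma = \bar v$. (The downward-stability argument inside that lemma gives $v_\tau \preceq v_\sigma$ for all $\tau$.) This proves (i). It also proves (ii): existence of a fixed point is assumed, and any fixed point equals $\vmax$, so the fixed point is unique. It also gives (iv), because the $\sigma$ just constructed satisfies $v_\sigma = \vmax$ and is therefore max-optimal.

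For (iii), first suppose $\sigma$ is $\vmax$-max-greedy. Then $T_\sigma \vmax = \tmax \vmax = \vmax$, so $\vmax$ is a fixed point of $T_\sigma$, and uniqueness gives $v_\sigma = \vmax$; thus $\sigma$ is max-optimal. Conversely, suppose $v_\sigma = \vmax$. Then $T_\sigma \vmax = T_\sigma v_\sigma = v_\sigma = \vmax = \tmax \vmax$, using (ii) for the last equality. So $\sigma$ is $\vmax$-max-greedy, by the characterization of greedy policies through $T_\sigma v = \tmax v$.

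I expect no genuine obstacle here. The only delicate point is making sure that $\tmax$ is well defined wherever it is used. This holds because a max-greedy policy exists at each $v$, which forces $\bigvee_\sigma T_\sigma v$ to exist and to equal $T_\sigma v$ for that greedy $\sigma$.
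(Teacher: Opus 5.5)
Your proposal is correct and follows essentially the paper's own argument. You take a fixed point of $\tmax$, identify it with $v_\sigma$ for a max-greedy $\sigma$, and use downward stability to show it is the greatest element of $V_\Sigma$, which yields uniqueness. The only difference is that you route the downward-stability step through Lemma~\ref{l:hmu}(ii), where the paper writes it inline; part (iii) is the same chain of equivalences via $T_\sigma \vmax = \tmax \vmax$.
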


\begin{proof}
    As for parts (i)--(ii), we observe that,
    by max-stability, $\tmax$ has a fixed point $\bar v$ in $V$.
    By existence of max-greedy policies, we can find a $\sigma \in \Sigma$
    such that $\bar v = \tmax \, \bar v = T_\sigma \, \bar v$.  But $T_\sigma$
    has a unique fixed point in $V$, equal to $v_\sigma$, so $\bar v =
    v_\sigma$.  Moreover, if $\tau$ is any policy, then $T_\tau \,
    \bar v \preceq \tmax \, \bar v = \bar v$ and hence, by downward stability,
    $v_\tau \preceq \bar v$.  These facts imply that $\vmax := \bar v$ is the greatest
    element of $V_\Sigma$ and a fixed point of $\tmax$.
    Since greatest elements are unique, $\vmax$ is the only fixed point of
    $\tmax$ in $V$.

    For (iii), parts (i)--(ii) give 
     $\vmax \in V$ and $\tmax \, \vmax = \vmax$.  Now
     recall that $\sigma$ is optimal if and only if $v_\sigma = \vmax$.
    Since $v_\sigma$ is the unique fixed point of $T_\sigma$, this is
    equivalent to $T_\sigma \, \vmax = \vmax$.  Since $\tmax \, \vmax = \vmax$, the last
    statement is equivalent to $T_\sigma \, \vmax = \tmax \vmax$, which is, in turn
    equivalent to the statement that $\sigma$ is $\vmax$-greedy.

    Part (iv) follows from part (iii) and existence of a $\vmax$-greedy policy. 
\end{proof}

\begin{proof}[Proof of Theorem~\ref{t:fbk}]
    Parts (i)--(iv) of Theorem~\ref{t:fbk} follow from Lemma~\ref{l:fpeibe}.
    The last claim follows from Lemma~\ref{l:hmu}.
\end{proof}

%
% \begin{example}\label{eg:subcannot}
%     Let $V = \{a, b\}$ and $\hat V = \{\hat a, \hat b\}$ where $a \preceq b$ and
%     $\hat a \preceq \hat b$.  Let
%     %
%     \begin{equation*}
%         F \, a = F \, b = \hat a,
%         \quad
%         G_1 \, \hat a = G_1 \, \hat b = b
%         \quad \text{and} \quad
%         G_2 \, \hat a = G_2 \, \hat b = a.
%     \end{equation*}
%     %
%     Being constant functions, $F$, $G_1$ and $G_2$ are order-preserving.
%     Setting $T_i := G_i \circ F$ and $\hat T_i = F \circ G_i$ forms ADPs
%     $\aA = (V, \{T_i\})$ and $\hat \aA = (\hat V, \{\hat T_i\})$. 
%     By construction, $\hat \aA$ is subordinate to $\aA$.  For $\aA$ we
%     have $T_1 a = T_1 b = b$, so $T_1$ is order stable and $v_1 = b$. Similarly,
%     $T_2$ is order stable and $v_2 = a$.  Hence $\aA$ is order stable and $\vmax
%     = v_1 = b$.  Letting $\Sigma^*$ be the set of optimal policies, we have
%     $\Sigma^* = \{1\}$.  For $\hat \aA$ we have $\hat T_i \, \hat a = \hat T_i
%     \, \hat b = \hat a$ for $i=1,2$, so $\hat v_1 = \hat v_2 = \hat a$.  Hence
%     $\hvmax = \hat a$ and the set of optimal policies is $\hat \Sigma^* = \{1,
%     2\}$.  This shows that the reverse implication in the final part of
%     Theorem~\ref{t:rgsc} is invalid:  the fact that $\sigma$ is max-optimal for
%     $\hat \aA$ does not imply that if $\sigma$ is max-optimal for $\aA$.
% \end{example}

\subsection{Proofs of Isomorphism Results}

\begin{proof}[Proof of Theorem~\ref{t:iso}]\label{pf:iso}
    Claims (i)--(ii) follow directly from Lemma~\ref{l:ocos}.  Regarding (iii),
    suppose $\aA$ is max-stable.  We claim that, for $\hat
    \aA$, max-greedy policies always exist. To see this, fix $\hat v \in \hat
    V$. Since $\aA$ is max-stable, we can choose $\sigma$ to be $F^{-1} \hat
    v$-max-greedy, so that $T_\tau \, F^{-1} \hat v \preceq T_\sigma \, F^{-1} \hat v$
    for all $\tau \in \Sigma$.  Then $F^{-1} \hat T_\tau \, \hat v \preceq F^{-1}
    \hat T_\sigma \, \hat v$ and hence  $\hat T_\tau \, \hat v \preceq \hat T_\sigma
    \, \hat v$ for all $\tau \in \Sigma$. In particular, $\sigma$ is $\hat
    v$-max-greedy.

    Continuing to assume that $\aA$ is max-stable, we now prove
    \eqref{eq:isotmax}.  For given $v \in V$, applying
    the order conjugacy \eqref{eq:adpih} yields
    \begin{equation*}
        \tmax v
        = \bigvee_\sigma T_\sigma \, v
        = \bigvee_\sigma F^{-1} \hat T_\sigma \, F \, v
        = F^{-1} \bigvee_\sigma \hat T_\sigma \, F \, v
        = F^{-1} \htmax \, F \, v,
    \end{equation*}
    which is equivalent to $F \circ \tmax = \htmax \circ F$ from
    \eqref{eq:isotmax}, and implies that $(V, \tmax)$ and $(\hat V, \htmax)$ are
    order conjugate under $F$. By max-stability of $\aA$ and Theorem~\ref{t:fbk}, the operator $\tmax$
    has unique fixed point $\vmax$ in $V$. Lemma~\ref{l:ocos} then implies that
    $\htmax$ has unique fixed point $F \vmax$ in $\hat V$.  This completes the
    proof that $\hat \aA$ is max-stable.  By max-stability of $\hat \aA$, the
    unique fixed point of $\htmax$ in $\hat V$ is $\hvmax$, so $F \vmax =
    \hvmax$ and both claims in \eqref{eq:isotmax} are verified. Finally, note that $\sigma$ is max-optimal for $\aA$ if and only if
    $T_\sigma \, \vmax = \vmax$, which, by the bijection property of $F$, is
    also equivalent to $F \, T_\sigma \, \vmax = \hvmax$.  Using $F \circ T_\sigma
    = \hat T_\sigma \circ F$, we can write this as $\hat T_\sigma \, \hvmax = \hvmax$,
    which is equivalent to the statement that $\sigma$ is max-optimal for $\hat
    \aA$. We have now confirmed all the claims in (iii).

    The proof of (iv) is identical after replacing max with min and $\bigvee$
    with $\bigwedge$.  (Alternatively, the proof can be derived from (iii) and duality.)
\end{proof}

\begin{proof}[Proof of Theorem~\ref{t:antiiso}]\label{pf:antiiso}
    Let $\aA$ and $\hat \aA$ be anti-isomorphic, so that $\aA$ is
    isomorphic to $\hat \aA^\partial$.  If $\aA$ is well-posed,
    then, by Theorem~\ref{t:iso}, $\hat \aA^\partial$ is well-posed,
    so $\hat T_\sigma$ has a unique fixed point in $\hat V$ for all $\sigma \in
    \Sigma$.  This implies that $\hat \aA$ is likewise well-posed, completing
    the proof of (i).  Similarly, if $\aA$ is  order stable, then, by
    Theorem~\ref{t:iso}, $\hat \aA^\partial$ is order stable, in which case
    $\hat \aA$ is  order stable, by Lemma~\ref{l:odod}. This proves (ii).

    Now suppose $\aA$ is max-stable.  Then, by $\aA \sim \hat \aA^\partial$ and
    Theorem~\ref{t:iso}, $\hat \aA^\partial$ is max-stable with $F \circ \tmax =
    \htmax^\partial \circ F$ and $\hvmax^\partial = F \, \vmax$.  As with our
    discussion of duality in Appendix~\ref{app:minim}, this is equivalent to $F \circ
    \tmax = \htmin \circ F$ and $\hvmin = F \, \vmax$, which proves
    \eqref{eq:antiisotmax}.

    Finally, Theorem~\ref{t:iso} tells us that $\aA$ and $\hat \aA^\partial$
    have the same max-optimal policies.  Applying Lemma~\ref{l:mmp},
    we see that the max-optimal policies of $\aA$ are the same as the
    min-optimal policies of $\hat \aA$.
\end{proof}

\subsection{Min-Optimality}\label{app:minim}

In the abstract setting of Section~\ref{s:aop}, minimization results are readily recovered from
maximization results by order duality.

We call a policy $\sigma \in \Sigma$ \navy{min-optimal} for $\aA$ if
$v_\sigma$ is a least element of $V_\Sigma$. When $V_\Sigma$ has a least element
we denote it by $\vmin$ and call it the \navy{min-value function}.
We define $\Hmin$ from $V$ to $\{v_\sigma\}$ via
$\Hmin \, v = v_\sigma$  where $\sigma$ is $v$-min-greedy and call $\Hmin$ the
\navy{Howard min-operator} generated by $\aA$.

Below, if $\aA := (V, \{T_\sigma\})$ is an ADP then its \navy{dual}
$\aA^\partial$ is the ADP $(V^\partial, \{T_\sigma\})$ where the partial order
$\preceq$ on $V$ is replaced with its dual $\preceq^\partial$.
In this setting, we let $\tmax^\partial$ be the Bellman max-operator for
$\aA^\partial$, $\vmax^\partial$ be the max-value function for $\aA^\partial$,
and so on.

\begin{lemma}\label{l:mmp}
    $\aA$ is min-stable if and only if $\aA^\partial$ is max-stable, in which
    case $\tmin = \tmax^\partial$ and $\Hmin =
    \Hmax^\partial$. A policy $\sigma$ is max-optimal for $\aA$ if and only if
    $\sigma$ is min-optimal for $\aA^\partial$.
\end{lemma}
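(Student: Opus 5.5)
The lemma follows by unwinding definitions, since $\aA$ and $\aA^\partial$ have the same underlying set $V$ and the same policy operators $\{T_\sigma\}$; only the order is reversed. I will first record three elementary facts about duals.

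First, each $T_\sigma$ is order-preserving for $\preceq^\partial$ as well, so $\aA^\partial$ is indeed an ADP. Its fixed points $v_\sigma$ are the same as those of $\aA$, so well-posedness is identical for both.

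Second, suprema in $V^\partial$ are infima in $V$. Hence $\tmax^\partial v = \bigvee^\partial_\sigma T_\sigma v = \bigwedge_\sigma T_\sigma v = \tmin v$, with one side existing if and only if the other does.

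Third, $\sigma$ is $v$-max-greedy for $\aA^\partial$ exactly when $T_\tau v \preceq^\partial T_\sigma v$ for all $\tau$, which says $T_\sigma v \preceq T_\tau v$ for all $\tau$. That is precisely $v$-min-greedy for $\aA$.

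With these facts, the equivalence of min-stability of $\aA$ and max-stability of $\aA^\partial$ follows component by component:
\begin{itemize}
\item order stability of each $(V,T_\sigma)$ is equivalent to order stability of $(V^\partial, T_\sigma)$ by Lemma~\ref{l:odod};
\item existence of min-greedy policies for $\aA$ is equivalent to existence of max-greedy policies for $\aA^\partial$ by the third fact;
\item existence of a fixed point of $\tmin$ is equivalent to existence of a fixed point of $\tmax^\partial$ by the second fact.
\end{itemize}
The identity $\tmin=\tmax^\partial$ is the second fact itself. The identity $\Hmin=\Hmax^\partial$ holds because both operators map $v$ to $v_\sigma$ for a greedy $\sigma$, and the two greedy sets coincide. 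The one point needing care is that the same designated selection of greedy policies must be used in both, which we may assume as in the footnote defining $\Hmax$.

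For the final claim, a greatest element of $V_\Sigma$ under $\preceq$ is a least element under $\preceq^\partial$, and $V_\Sigma$ is the same set in both ADPs. So $v_\sigma$ is greatest in $(V_\Sigma,\preceq)$ if and only if it is least in $(V_\Sigma,\preceq^\partial)$. This gives max-optimality for $\aA$ if and only if min-optimality for $\aA^\partial$. Applying the same observation with the roles of the two orders swapped gives the symmetric statement used in the proof of Theorem~\ref{t:antiiso}.

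I expect no real obstacle; the only subtle step is keeping track of which order each notion refers to, together with the consistent tie-breaking for the Howard operators.
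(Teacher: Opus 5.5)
Your proposal is correct and takes essentially the same approach as the paper: Lemma~\ref{l:odod} for order stability, the observation that $v$-min-greedy for $\aA$ coincides with $v$-max-greedy for $\aA^\partial$ (giving $\tmax^\partial = \tmin$), and the swap between greatest and least elements of $V_\Sigma$ for the optimal-policy claim. Your component-by-component equivalence also spells out what the paper leaves to ``the remaining steps follow easily'': the converse direction, the fixed-point condition, and the consistent greedy selection needed for $\Hmin = \Hmax^\partial$.
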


\begin{proof}
    Let $\aA$ be min-stable.  Then $\aA^\partial$ is order stable, by
    Lemma~\ref{l:odod}. Now fix $v \in V$ and suppose that $\sigma$ is
    min-greedy for $\aA$, so that $T_\sigma \, v \preceq T_\tau \, v$ for all
    $\tau \in \Sigma$.  Then $T_\sigma \, v \succeq^\partial T_\tau \, v$ for
    all $\tau \in \Sigma$, so $\sigma$ is $v$-max-greedy for $\aA^\partial$
    and $\tmax^\partial v = T_\sigma v = \tmin v$.
    We have proved that $\aA^\partial$ is max-stable and $\tmax^\partial =
    \tmin$. The remaining steps follow easily from the definitions.
\end{proof}

Results analogous to Theorem~\ref{t:fbk} hold for  minimization.

\begin{theorem}[Min-optimality]\label{t:fbk_min}
    If $\aA$ is min-stable, then
    \begin{enumerate}
        \item the min-value function $\vmin$ exists in $V$,
        \item $\vmin$ is the unique solution to the Bellman min-equation in $V$,
        \item a policy is min-optimal if and only if it is $\vmin$-min-greedy.
        \item at least one min-optimal policy exists.
    \end{enumerate}
    If, in addition, $\Sigma$ is finite, then Howard min-policy iteration
    converges to $\vmin$ in finitely many steps.
\end{theorem}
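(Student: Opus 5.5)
The plan is to obtain Theorem~\ref{t:fbk_min} from Theorem~\ref{t:fbk} by order duality, applied to the dual ADP $\aA^\partial = (V^\partial, \{T_\sigma\})$. The policy operators are unchanged, and each $T_\sigma$ remains order-preserving for $\preceq^\partial$, so $\aA^\partial$ is an ADP with the same $\sigma$-value functions $v_\sigma$.

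First, I apply Lemma~\ref{l:mmp}. If $\aA$ is min-stable, then $\aA^\partial$ is max-stable and $\tmax^\partial = \tmin$. Theorem~\ref{t:fbk} applied to $\aA^\partial$ then gives four facts:
\begin{itemize}
\item a greatest element $\vmax^\partial$ of $V_\Sigma$ with respect to $\preceq^\partial$;
\item $\vmax^\partial$ is the unique fixed point of $\tmax^\partial$;
\item a policy is max-optimal for $\aA^\partial$ if and only if it is $\vmax^\partial$-max-greedy for $\aA^\partial$;
\item at least one such policy exists.
\end{itemize}

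Second, I translate each fact back to $\aA$. A greatest element under $\preceq^\partial$ is a least element under $\preceq$, so $\vmin$ exists and equals $\vmax^\partial$. This gives (i). Since $\tmin = \tmax^\partial$, $\vmin$ is the unique fixed point of $\tmin$, which gives (ii).

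For (iii), $\sigma$ is $v$-max-greedy in $\aA^\partial$ exactly when $T_\tau v \preceq^\partial T_\sigma v$ for all $\tau$. This is the same as $T_\sigma v \preceq T_\tau v$ for all $\tau$, that is, $\sigma$ is $v$-min-greedy in $\aA$. Likewise, max-optimality in $\aA^\partial$ means $v_\sigma$ is $\preceq^\partial$-greatest, i.e.\ $\preceq$-least, so it coincides with min-optimality in $\aA$. Together these give (iii), and (iv) follows immediately.

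For the Howard claim, Lemma~\ref{l:mmp} gives $\Hmin = \Hmax^\partial$, provided the same designated greedy selection is used in both. By the final statement of Theorem~\ref{t:fbk} for finite $\Sigma$, $\Hmax^{\partial k} v = \vmax^\partial = \vmin$ for all $k$ beyond some $K$.

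The only delicate point is bookkeeping. The last sentence of Lemma~\ref{l:mmp} is stated as ``max-optimal for $\aA$ iff min-optimal for $\aA^\partial$,'' which is the reverse of the direction I need. I will instead verify the needed direction directly from the definitions, applying that sentence to $\aA^\partial$ and using $(\aA^\partial)^\partial = \aA$. Everything else is a direct transcription.
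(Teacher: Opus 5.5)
Your proposal is correct and follows the paper's proof essentially exactly: use Lemma~\ref{l:mmp} to get max-stability of $\aA^\partial$ with $\tmax^\partial = \tmin$, apply Theorem~\ref{t:fbk} to the dual, and translate back via $\vmin = \bigvee^\partial_\sigma v_\sigma = \vmax^\partial$. You simply spell out the greedy/optimality translations and the Howard step that the paper leaves as ``analogous arguments,'' and your fix for the direction of the last sentence of Lemma~\ref{l:mmp} (applying it to $\aA^\partial$ and using $(\aA^\partial)^\partial = \aA$) is sound.
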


\begin{proof}[Proof of Theorem~\ref{t:fbk_min}]\label{pf:fbk_min}
    Let $\aA$ be min-stable.  By
    Lemma~\ref{l:mmp}, the dual $\aA^\partial$ is max-stable.  Hence, by
    Theorem~\ref{t:fbk}, $\vmax^\partial$ exists in $V$.
    But then $\vmin$ exists in $V$ and is equal to $\vmax^\partial$,
    since $\vmin = \bigwedge_\sigma v_\sigma = \bigvee_\sigma^\partial v_\sigma
    = \vmax^\partial$. Also, by Theorem~\ref{t:fbk}, $\vmax^\partial$ is the unique
    solution to $\tmax^\partial \vmax^\partial = \vmax^\partial$.
    Applying Lemma~\ref{l:mmp}, we see that $\tmin \, \vmin =
    \vmin$.  The remaining claims follow by analogous arguments.
\end{proof}

\subsection{Proofs of Epstein--Zin Optimality Results}\label{ss:pezr}

This section offers a proof of Theorem~\ref{t:eziff}.
To do so, we establish the following.
\begin{enumerate}
    \item[(C1)] If $\eE(\beta, Q, \theta)^{1/\theta} < 1$ and $\gamma < 0$, then $\hat
        \aA$ is min-stable.
    \item[(C2)] If $\eE(\beta, Q, \theta)^{1/\theta} < 1$ and $\gamma > 0$, then $\hat
        \aA$ is max-stable.
    \item[(C3)] If $\eE(\beta, Q, \theta)^{1/\theta} \geq 1$, then $\hat \aA$ is not well-posed.
\end{enumerate}
Together these facts establish Theorem~\ref{t:eziff}.  Indeed, if (C1)
holds, then, since $\aA$ and $\hat \aA$ are anti-isomorphic (see
Lemma~\ref{l:f}), it follows that $\aA$ is max-stable (Theorem~\ref{t:antiiso}).
If (C2) holds, then, since $\aA$ and $\hat \aA$ are isomorphic (see
Lemma~\ref{l:f}), it follows that $\aA$ is max-stable (Theorem~\ref{t:iso}).
Finally, if (C3) holds, then 
$\aA$ is also not well-posed (by Theorem~\ref{t:iso} or
Theorem~\ref{t:antiiso}, depending on whether $\gamma > 0$ or $\gamma < 0$.)
When $\aA$ is not well-posed, recursive utility does not exist, so the dynamic program is
undefined.

In what follows, given $\sigma \in \Sigma$ we set
\begin{equation*}
    A_\sigma(x, x') := \beta(x)^\theta P(x, \sigma(x), x')
    \qquad (x, x' \in \Xsf).
\end{equation*}
Also, for any linear operator $B$, the symbol $\rho(B)$ represents the spectral
radius.

\begin{lemma}\label{l:soz}
    For all $\sigma \in \Sigma$, we have $\rho(A_\sigma) = \eE(\beta, Q, \theta)$.
\end{lemma}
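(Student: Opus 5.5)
The plan is to use the product structure to write $A_\sigma$ as a Kronecker-type product and compute its spectral radius with Gelfand's formula. With $x = (y,z)$ and $x' = (y', z')$ we have
\begin{equation*}
    A_\sigma((y,z),(y',z')) = R(y, \sigma(y,z), y') \, \beta(z)^\theta Q(z,z').
\end{equation*}
Let $D$ be the diagonal matrix on $\Zsf$ with entries $\beta(z)^\theta$, and set $B := DQ$, a nonnegative matrix on $\Zsf$.

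First I will check by induction on $k$ that, for any nonnegative $h$ on $\Zsf$ lifted to $\Xsf$ by $h(y,z) = h(z)$,
\begin{equation*}
    (A_\sigma^k h)(y,z) = (B^k h)(z).
\end{equation*}
This works because summing over $y'$ removes the $R$ factor, since $R(y,a,\cdot)$ is a probability distribution for each feasible $(y,a)$, whatever action $\sigma$ selects. Taking $h = \1$ and using the Markov chain representation gives
\begin{equation*}
    (B^k \1)(z) = \EE \prod_{t=0}^{k-1} \beta(Z_t(z))^\theta .
\end{equation*}

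Next I will use Gelfand's formula with the operator norm induced by the sup norm on $\RR^\Xsf$. For a nonnegative matrix this norm equals the maximal row sum, i.e.\ $\|A_\sigma^k\| = \max_x (A_\sigma^k \1)(x)$. By the identity above, this equals $\sup_{z} \EE \prod_{t<k} \beta(Z_t(z))^\theta$. So $\rho(A_\sigma) = \lim_k \|A_\sigma^k\|^{1/k}$ is exactly $\eE(\beta, Q, \theta)$ in \eqref{eq:elldef}, and Gelfand's formula also shows that the limit defining $\eE$ exists. The same argument gives $\rho(A_\sigma) = \rho(B)$, which is independent of $\sigma$.

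The only delicate point, though it is minor, is that $A_\sigma$ depends on $\sigma$ only through the $R$ factor. I need the row-sum bookkeeping to go through in the sup norm for every $k$ and not just for $k = 1$, which is why I will do the induction on iterates rather than argue at a single step. Irreducibility of $P_\sigma$ is not needed for this lemma; it only enters later, through Perron--Frobenius arguments.
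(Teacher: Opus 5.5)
Your proposal is correct and follows essentially the same route as the paper. Both arguments use an induction showing that $(A_\sigma^k \1)(y,z) = \EE \prod_{t=0}^{k-1}\beta(Z_t(z))^\theta$, which does not depend on $y$ or $\sigma$, followed by a Gelfand-type formula for the spectral radius; the only cosmetic difference is that the paper cites Theorem~9.1 of \cite{krasnoselskii1972approximate}, while you get the same identity from the standard Gelfand formula using $\|A_\sigma^k\|_\infty = \max_x (A_\sigma^k \1)(x)$ for nonnegative matrices.
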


\begin{proof}
    Fix $z \in \Zsf$ and let $\1$ be a vector of ones.  An inductive argument
    shows that 
    \begin{equation}\label{eq:akz}
        (A_\sigma^k \1)(x) 
        = (A_\sigma^k \1)(z) 
        = \EE \prod_{t=0}^{k-1} \beta(Z_t(z))^\theta.
    \end{equation}
    Combining \eqref{eq:akz} with Theorem~9.1 of
    \cite{krasnoselskii1972approximate}, we have
    \begin{equation*}
        \rho(A_\sigma) 
        = \lim_{k \to \infty} \left\{ \sup_z ( A_\sigma^k \1)(z) \right\}^{1/k}
        = \lim_{k \to \infty}
            \left\{
                \sup_{z \in \Zsf}
                \EE \prod_{t=0}^{k-1} \beta(Z_t(z))^\theta
            \right\}^{1/k},
    \end{equation*}
    as was to be shown.
\end{proof}

\begin{lemma}\label{l:ezos}
    The ADP $\hat \aA$ is order stable if and only if $\eE(\beta, Q, \theta)^{1/\theta} < 1$.
    Moreover, if this condition fails, then $\hat \aA$ is not well
    posed.
\end{lemma}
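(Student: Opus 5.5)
The plan is to compare each $\hat T_\sigma$ with the linear operator $A_\sigma$ and to use Lemma~\ref{l:soz}, which gives $\rho(A_\sigma) = \eE(\beta, Q, \theta)$ for every $\sigma$. Pointwise,
\begin{equation*}
    \hat T_\sigma \, v = g\bigl(A_\sigma v\bigr),
    \qquad
    g(t) := \bigl(r_\sigma^\alpha + t^{1/\theta}\bigr)^\theta ,
\end{equation*}
because $\beta (P_\sigma v)^{1/\theta} = (\beta^\theta P_\sigma v)^{1/\theta}$. So $\hat T_\sigma$ is a monotone scalar perturbation of the positive linear map $A_\sigma$.

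Since $r_\sigma^\alpha > 0$, we have $g(t) > t$ when $\theta > 0$ and $g(t) < t$ when $\theta < 0$. The condition $\eE^{1/\theta} < 1$ therefore reads $\rho(A_\sigma) < 1$ when $\theta > 0$, and $\rho(A_\sigma) > 1$ when $\theta < 0$. I will assume $\beta > 0$ pointwise, so that $A_\sigma$ inherits irreducibility from $P_\sigma$ (if zeros are allowed this needs an extra word).

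\textbf{Necessity and the non-well-posedness claim.} Suppose the condition fails. Take a strictly positive left Perron eigenvector $\phi$ of $A_\sigma$, which exists because $A_\sigma$ is irreducible.

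If $\theta > 0$ and $\rho(A_\sigma) \geq 1$, a fixed point $v$ of $\hat T_\sigma$ would satisfy
\begin{equation*}
    \phi v = \phi\, \hat T_\sigma v > \phi A_\sigma v = \rho(A_\sigma)\, \phi v \geq \phi v ,
\end{equation*}
which is impossible. If $\theta < 0$ and $\rho(A_\sigma) \leq 1$, the inequalities reverse and give the same contradiction.

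Hence no $\hat T_\sigma$ has a fixed point, so $\hat \aA$ is not well-posed. Since order stability presupposes a unique fixed point, $\hat \aA$ is not order stable either. This gives the ``only if'' direction and the final sentence of the lemma at once.

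\textbf{Sufficiency.} Here I must show that each $\hat T_\sigma$ has a unique fixed point and that $(V, \hat T_\sigma)$ is upward and downward stable. I will use the structure of $g$.

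$g$ is increasing. Depending on the sign and size of $\theta$, it is concave or convex, and it behaves like $t$ at the relevant end: at $\infty$ when $\theta > 0$, and at $0$ when $\theta < 0$. Since $A_\sigma$ is positive and linear, $\hat T_\sigma$ is then a monotone concave (or convex) operator on the positive cone, asymptotic to $A_\sigma$.

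\emph{Existence.} For $\theta > 0$ and $\rho(A_\sigma) < 1$, take the Perron vector $e$ of $A_\sigma$. Using $g(t) - t = o(t)$ at $\infty$, I will show that $\hat T_\sigma(ce) \leq ce$ for large $c$. Together with $\hat T_\sigma(\epsilon e) \geq \epsilon e$ for small $\epsilon$, this yields an order interval mapped into itself. The finite-dimensional Brouwer or Tarski argument then gives a fixed point. The case $\theta < 0$ with $\rho(A_\sigma) > 1$ is symmetric, with the roles of $0$ and $\infty$ exchanged.

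\emph{Uniqueness and order stability.} For these I will use the standard sub- or super-homogeneity argument for concave or convex monotone maps. Suppose $v \leq \hat T_\sigma v$ and let $\bar v$ be the fixed point. Let $\lambda = \max_x v(x)/\bar v(x)$ and suppose $\lambda > 1$. Strict concavity (or the strict inequality coming from $r_\sigma^\alpha > 0$) then gives $\hat T_\sigma(\lambda \bar v) < \lambda \bar v$ at every point. Combined with $v \leq \lambda \bar v$ and with equality at the maximizing $x$, this yields a contradiction via irreducibility. A symmetric argument handles downward stability and uniqueness.

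\textbf{Main obstacle.} I expect the hardest step to be the case split on $\theta$: whether $g$ is concave or convex, and whether $\theta \in (0,1)$, $\theta > 1$, or $\theta < 0$. The homogeneity inequality $\hat T_\sigma(\lambda v)$ versus $\lambda \hat T_\sigma v$ must hold in the correct direction in each case.

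I would first try to verify directly that $g(\lambda t) < \lambda g(t)$ for $\lambda > 1$ when $\theta > 0$. This follows from
\begin{equation*}
    g(\lambda t) = \lambda \bigl(\lambda^{-1/\theta} r_\sigma^\alpha + t^{1/\theta}\bigr)^\theta ,
\end{equation*}
which is strictly less than $\lambda g(t)$, independent of concavity. For $\theta < 0$ the analogous statement should hold with $\lambda < 1$. This observation would bypass the concavity case analysis entirely and reduce order stability to the $\lambda$-scaling argument above.
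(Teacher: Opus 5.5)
Your argument is correct, but it takes a different route from the paper. The paper proves nothing about $\hat T_\sigma$ directly. It invokes Theorem~3.1 of \cite{stachurski2025unique}, which gives a dichotomy: $(V,\hat T_\sigma)$ is globally stable when $\rho(A_\sigma)^{1/\theta}<1$ and has no fixed point when $\rho(A_\sigma)^{1/\theta}\geq 1$. It then combines this with Lemma~\ref{l:soz} and Example~\ref{eg:pspace}, where global stability plus monotonicity gives order stability.

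You instead prove both halves from scratch:
\begin{itemize}
\item nonexistence, by pairing a putative fixed point with a left Perron vector and comparing $g(t)$ with $t$;
\item existence, via the invariant order interval $[\epsilon e, ce]$ and Tarski;
\item uniqueness and upward/downward stability, from strict subhomogeneity of $\hat T_\sigma$.
\end{itemize}
The paper's route is shorter and yields more, namely convergence of $\hat T_\sigma^k v$ for every $v$, but it rests on an external theorem. Yours is elementary and self-contained, and it never needs convergence of iterates. It also makes clear that the spectral condition governs only existence: once a fixed point exists, uniqueness and order stability follow from subhomogeneity. Your assumption $\beta>0$ is implicit in the paper as well, since $A_\sigma$ and $\eE(\beta,Q,\theta)$ involve $\beta^\theta$ with $\theta$ possibly negative.

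One correction to your last paragraph: no case split on $\theta$ is needed. Your identity gives
\[
\lambda^{-1} g(\lambda t) = \bigl(\lambda^{-1/\theta} r_\sigma^\alpha + t^{1/\theta}\bigr)^{\theta},
\]
which is strictly decreasing in $\lambda$ for either sign of $\theta$. When $\theta<0$, the bracket increases in $\lambda$ and is raised to a negative power. Hence, for both signs of $\theta$, $g(\lambda t)<\lambda g(t)$ for all $\lambda>1$ and $g(\mu t)>\mu g(t)$ for all $\mu<1$.

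Upward stability uses the first inequality with $\lambda=\max_x v(x)/\bar v(x)$. Downward stability uses the second with $\mu=\min_x v(x)/\bar v(x)$. If you intended a strict inequality $g(\lambda t)<\lambda g(t)$ for $\lambda<1$ when $\theta<0$, that statement is false. Also, irreducibility plays no role in the scaling step, because the strict inequality holds at every $x$.
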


\begin{proof}
    Fix $\sigma \in \Sigma$ and let $V$ and $\hat T_\sigma$ be 
    as defined in Section~\ref{ss:modep}. By Theorem~3.1 of \cite{stachurski2025unique}, 
    \begin{enumerate}
        \item $\rho(A_\sigma)^{1/\theta} < 1$ $\implies$ $(V, \hat T_\sigma)$ is
            globally stable on $V$, and 
        \item $\rho(A_\sigma)^{1/\theta} \geq 1$ 
            $\implies$ $\hat T_\sigma$ has no fixed point in $V$.
    \end{enumerate}
    We saw in Lemma~\ref{l:soz} that $\rho(A_\sigma) = \eE(\beta, Q, \theta)$, 
    so $\eE(\beta, Q, \theta)^{1/\theta} < 1$ if and only if (i) holds.
    In this case, $(V, \hat T_\sigma)$ is globally stable and hence order
    stable (by Example~\ref{eg:pspace}).  Therefore $\hat \aA$ is order stable.

    If, on the other hand, $\eE(\beta, Q, \theta)^{1/\theta} \geq 1$, then
    (ii) holds and $\hat \aA$ is not well-posed (and therefore not order stable).
\end{proof}

Now we return to (C1)--(C3) above.  Assume the conditions in (C1).
Then $\hat \aA$ is order stable by Lemma~\ref{l:ezos}.  Also, for
$v \in V$, we construct a $v$-min-greedy policy $\sigma$ by taking
\begin{equation*}
    \sigma(x)
    \in \argmin
    \left\{
        r(x, a)^\alpha + \beta (x)
            \left[
                \sum_{x' \in \Xsf} v(x') P(x,a, x')
            \right]^{1/\theta}
    \right\}^{\theta}
\end{equation*}
for all $x \in \Xsf$.  Since the policy set is finite,
Proposition~\ref{p:fposet} implies that $\hat \aA$ is min-stable. Hence (C1)
holds.  The proof of (C2) is analogous.  Finally, (C3) follows directly from
Lemma~\ref{l:ezos}.

%\begin{proof}
%    Suppose $\bar v \in \fix(T)$. 
%    By existence of greedy policies, there exists a $\sigma \in \Sigma$
%    satisfying $T \bar v = T_{\sigma} \bar v$.  Since
%    $ \bar v = T  \bar v = T_{\sigma}  \bar v$ and $v_{\sigma}$ is the only
%    fixed point of $T_{\sigma}$, we have $\bar v = v_{\sigma}$.  Hence
%    $\bar v \preceq \vmax$.
%    To check the reverse inequality, fix $\sigma \in \Sigma$,
%    and note that $ \bar v = T  \bar v \succeq T_{\sigma}  \bar v$.  Downward
%    stability now yields $ \bar v \succeq v_{\sigma}$. Since $\sigma$ is
%    arbitrary it follows that $\bar v \succeq \vmax$.  Therefore $\bar v =
% \vmax$.
%    Since every fixed point of $T$ equals $\vmax$, we see that $\vmax$ is the unique
%    solution to the Bellman equation in $V$.
%
%    To see that Bellman's principle of optimality
%    holds, recall that $\sigma$ is optimal if and only if $v_\sigma = \vmax$.
%    Since $v_\sigma$ is the unique fixed point of $T_\sigma$, this is
%    equivalent to $T_\sigma \, \vmax = \vmax$.  Since $T\vmax = \vmax$, the last
%    statement is equivalent to $T_\sigma \, \vmax = T \vmax$, which is, in turn
%    equivalent to the statement that $\sigma$ is $\vmax$-greedy.  This proves
%    claim (iii) in the theorem.
%
%    Regarding claim (iv), since Bellman's principle of optimality holds and
%    there exists at least one
%    $\vmax$-greedy policy, an optimal policy also exists.
%\end{proof}

\bibliographystyle{apalike}
\bibliography{qe_bib}

@inproceedings{Farahmand2010Error,
  author    = {Farahmand, Amir-Massoud and Munos, R{\'e}mi  and Szepesv{\'a}ri, Csaba},
  title     = {Error Propagation for Approximate Policy and Value Iteration},
  booktitle = {Advances in Neural Information Processing Systems 23},
  editor    = {John D. Lafferty and Christopher K. I. Williams and John Shawe-Taylor and Richard S. Zemel and Aron Culotta},
  pages     = {568--576},
  year      = {2010},
  url       = {http://papers.neurips.cc/paper/4181-error-propagation-for-approximate-policy-and-value-iteration.pdf}
}

@article{sargent2025partially,
      title={Dynamic Programming on Partially Ordered Sets}, 
      journal={SIAM Journal on Control and Optimization},
      author={Thomas J. Sargent and John Stachurski},
      year={2025},
      volume={in press}
}

@book{arnold2012geometrical,
  title={Geometrical methods in the theory of ordinary differential equations},
  author={Arnold, Vladimir Igorevich},
  volume={250},
  year={2012},
  publisher={Springer Science \& Business Media}
}

@article{smale1967differentiable,
  title={Differentiable dynamical systems},
  author={Smale, Stephen},
  journal={Bulletin of the American Mathematical Society},
  volume={73},
  number={6},
  pages={747--817},
  year={1967}
}

@article{bauerle2018stochastic,
  title={Stochastic optimal growth model with risk sensitive preferences},
  author={B{\"a}uerle, Nicole and Ja{\'s}kiewicz, Anna},
  journal={Journal of Economic Theory},
  volume={173},
  pages={181--200},
  year={2018},
  publisher={Elsevier}
}

@misc{fedus2019hyperbolic,
      title={Hyperbolic Discounting and Learning over Multiple Horizons}, 
      author={William Fedus and Carles Gelada and Yoshua Bengio and Marc G. Bellemare and Hugo Larochelle},
      year={2019},
      eprint={1902.06865},
      archivePrefix={arXiv},
      primaryClass={stat.ML}
}

@book{krasnoselskii1972approximate,
  author	= {Krasnosel’skii, M. A. and Vainikko, G. M. and Zabreiko,
		  P. P. and Rutitskii, Ya. B. and Stetsenko, V. Ya.},
  title		= {Approximate Solution of Operator Equations},
  year		= 1972,
  doi		= {10.1007/978-94-010-2715-1},
  url		= {http://dx.doi.org/10.1007/978-94-010-2715-1},
  isbn		= 9789401027151,
  publisher	= {Springer Netherlands}
}

@article{howard1972risk,
  title={Risk-sensitive {M}arkov decision processes},
  author={Howard, Ronald A and Matheson, James E},
  journal={Management Science},
  volume={18},
  number={7},
  pages={356--369},
  year={1972},
  publisher={INFORMS}
}

@book{layek2015introduction,
  title={An Introduction to Dynamical Systems and Chaos},
  author={Layek, G.C.},
  isbn={9788132225560},
  year={2015},
  publisher={Springer}
}

@book{sternberg2014dynamical,
  title={Dynamical Systems},
  author={Sternberg, S.},
  isbn={9780486135144},
  year={2014},
  publisher={Dover Publications}
}

@article{fei2021exponential,
  title={Exponential {B}ellman equation and improved regret bounds for
         risk-sensitive reinforcement learning},
  author={Fei, Yingjie and Yang, Zhuoran and Chen, Yudong and Wang, Zhaoran},
  journal={Advances in Neural Information Processing Systems},
  volume={34},
  pages={20436--20446},
  year={2021}
}

@book{davey2002introduction,
  title={Introduction to lattices and order},
  author={Davey, Brian A and Priestley, Hilary A},
  year={2002},
  publisher={Cambridge University Press}
}

@article{maliar2021deep,
  title={Deep learning for solving dynamic economic models.},
  author={Maliar, Lilia and Maliar, Serguei and Winant, Pablo},
  journal={Journal of Monetary Economics},
  volume={122},
  pages={76--101},
  year={2021},
  publisher={Elsevier}
}

@article{azinovic2022deep,
  title={Deep equilibrium nets},
  author={Azinovic, Marlon and Gaegauf, Luca and Scheidegger, Simon},
  journal={International Economic Review},
  volume={63},
  number={4},
  pages={1471--1525},
  year={2022},
  publisher={Wiley Online Library}
}

@article{gomez2020important,
	author = {Gomez-Cram, Roberto and Yaron, Amir},
	doi = {10.1093/rfs/hhaa039},
	issn = {0893-9454},
	journal = {The Review of Financial Studies},
	month = {05},
	number = {2},
	pages = {985-1045},
	title = {{How Important Are Inflation Expectations for the Nominal Yield Curve?}},
	volume = {34},
	year = {2020},
}

@article{weil1990nonexpected,
  title={Nonexpected utility in macroeconomics},
  author={Weil, Philippe},
  journal={The Quarterly Journal of Economics},
  volume={105},
  number={1},
  pages={29--42},
  year={1990},
  publisher={MIT Press}
}

@article{de2022dynamic,
  title={Dynamic Economics with Quantile Preferences},
  author={de Castro, Luciano and Galvao, Antonio F and Nunes, Daniel},
  journal={Available at SSRN 4108230},
  year={2022}
}

@article{kristensen2021solving,
  title={Solving dynamic discrete choice models using smoothing and sieve methods},
  author={Kristensen, Dennis and Mogensen, Patrick K and Moon, Jong Myun and Schjerning, Bertel},
  journal={Journal of Econometrics},
  volume={223},
  number={2},
  pages={328--360},
  year={2021},
  publisher={Elsevier}
}

@article{stachurski2025unique,
  title={Unique solutions to power-transformed affine systems},
  author={Stachurski, John and Wilms, Ole and Zhang, Junnan},
  journal={Journal of Mathematical Analysis and Applications},
  volume={550},
  number={1},
  pages={129515},
  year={2025},
  publisher={Elsevier}
}

@article{bloise2024not,
  title={Do not Blame {B}ellman: It Is {K}oopmans' Fault},
  author={Bloise, Gaetano and Le Van, Cuong and Vailakis, Yiannis},
  journal={Econometrica},
  volume={92},
  number={1},
  pages={111--140},
  year={2024},
  publisher={Wiley Online Library}
}

@book{hernandez2012discrete,
  title     = {Discrete-time {M}arkov control processes: basic optimality criteria},
  author    = {Hern{\'a}ndez-Lerma, On{\'e}simo and Lasserre, Jean B},
  volume    = {30},
  year      = {2012},
  publisher = {Springer Science \& Business Media}
}

@article{bloise2018convex,
  title     = {Convex dynamic programming with (bounded) recursive utility},
  author    = {Bloise, Gaetano and Vailakis, Yiannis},
  journal   = {Journal of Economic Theory},
  volume    = {173},
  pages     = {118--141},
  year      = {2018},
  publisher = {Elsevier}
}

@article{degroot2018,
  author  = {de Groot, Oliver and Richter, 
             Alexander W. and Throckmorton, Nathaniel A.},
  journal = {Econometrica},
  number  = {4},
  pages   = {1513-1526},
  title   = {Uncertainty Shocks in a Model of Effective Demand: Comment},
  volume  = {86},
  year    = {2018}
}

@article{albuquerque2016valuation,
  title     = {Valuation risk and asset pricing},
  author    = {Albuquerque, Rui and Eichenbaum, Martin and Luo, Victor Xi and Rebelo, Sergio},
  journal   = {The Journal of Finance},
  volume    = 71,
  number    = 6,
  pages     = {2861--2904},
  year      = 2016,
  publisher = {Wiley Online Library}
}

@article{stachurski2021dynamic,
  title     = {Dynamic programming with state-dependent discounting},
  author    = {Stachurski, John and Zhang, Junnan},
  journal   = {Journal of Economic Theory},
  volume    = {192},
  pages     = {105190},
  year      = {2021},
  publisher = {Elsevier}
}

@article{bauerle2021stochastic,
  title={Stochastic dynamic programming with non-linear discounting},
  author={B{\"a}uerle, Nicole and Ja{\'s}kiewicz, Anna and Nowak, Andrzej S},
  journal={Applied Mathematics \& Optimization},
  volume={84},
  number={3},
  pages={2819--2848},
  year={2021},
  publisher={Springer}
}

@book{bauerle2011markov,
  title     = {Markov decision processes with applications to finance},
  author    = {B{\"a}uerle, Nicole and Rieder, Ulrich},
  year      = {2011},
  publisher = {Springer Science \& Business Media}
}

@article{rust1994structural,
  title     = {Structural estimation of {M}arkov decision processes},
  author    = {Rust, John},
  journal   = {Handbook of Econometrics},
  volume    = {4},
  pages     = {3081--3143},
  year      = {1994},
  publisher = {Elsevier}
}

@article{denardo1967contraction,
  title     = {Contraction mappings in the theory underlying dynamic programming},
  author    = {Denardo, Eric V},
  journal   = {Siam Review},
  volume    = {9},
  number    = {2},
  pages     = {165--177},
  year      = {1967},
  publisher = {SIAM}
}

@article{balbus2022time,
  title     = {Time-consistent equilibria in 
               dynamic models with recursive payoffs and behavioral discounting},
  author    = {Balbus, {\L}ukasz and Reffett, Kevin and Wo{\'z}ny, {\L}ukasz},
  journal   = {Journal of Economic Theory},
  pages     = {105493},
  year      = {2022},
  publisher = {Elsevier}
}

@article{schorfheide2018identifying,
  title     = {Identifying long-run risks: A {B}ayesian mixed-frequency approach},
  author    = {Schorfheide, Frank and Song, Dongho and Yaron, Amir},
  journal   = {Econometrica},
  volume    = {86},
  number    = {2},
  pages     = {617--654},
  year      = {2018},
  publisher = {Wiley Online Library}
}

@techreport{gao2021robust,
  title       = {Robust risk-sensitive reinforcement learning agents for trading markets},
  author      = {Gao, Yue and Lui, Kry Yik Chau and Hernandez-Leal, Pablo},
  institution = {arXiv preprint arXiv:2107.08083},
  year        = {2021}
}

@article{cagetti2002robustness,
  title     = {Robustness and pricing with uncertain growth},
  author    = {Cagetti, Marco and Hansen, Lars Peter and Sargent, Thomas J and Williams, Noah},
  journal   = {The Review of Financial Studies},
  volume    = {15},
  number    = {2},
  pages     = {363--404},
  year      = {2002},
  publisher = {Oxford University Press}
}

@book{hansen2011robustness,
  title     = {Robustness},
  author    = {Hansen, Lars Peter and Sargent, Thomas J},
  year      = {2011},
  publisher = {Princeton university press}
}

@article{epstein1989risk,
  author    = {Epstein, Larry G. and Zin, Stanley E},
  journal   = {Econometrica},
  number    = {4},
  pages     = {937-969},
  publisher = {Econometric Society},
  title     = {Risk Aversion 
               and the Temporal Behavior of Consumption and Asset Returns: A
               Theoretical Framework},
  volume    = {57},
  year      = {1989}
}

@article{marinacci2019unique,
  title     = {Unique tarski fixed points},
  author    = {Marinacci, Massimo and Montrucchio, Luigi},
  journal   = {Mathematics of Operations Research},
  volume    = {44},
  number    = {4},
  pages     = {1174--1191},
  year      = {2019},
  publisher = {INFORMS}
}

@article{marinacci2010unique,
  author    = {Marinacci, Massimo and Montrucchio, Luigi},
  journal   = {Journal of Economic Theory},
  number    = {5},
  pages     = {1776-1804},
  publisher = {Science Direct},
  title     = {Unique solutions for stochastic recursive utilities},
  volume    = {145},
  year      = {2010}
}

@book{bellman1957dynamic,
  title     = {Dynamic programming},
  author    = {Bellman, Richard},
  journal   = {Science},
  year      = {1957},
  publisher = {American Association for the Advancement of Science}
}

@book{kochenderfer2022algorithms,
  title     = {Algorithms for decision making},
  author    = {Kochenderfer, Mykel J and Wheeler, Tim A and Wray, Kyle H},
  year      = {2022},
  publisher = {The MIT Press}
}

@article{long1983real,
  title     = {Real business cycles},
  author    = {Long, John B and Plosser, Charles I},
  journal   = {Journal of Political Economy},
  volume    = {91},
  number    = {1},
  pages     = {39--69},
  year      = {1983},
  publisher = {The University of Chicago Press}
}

@book{bertsekas2022abstract,
  title     = {Abstract dynamic programming},
  author    = {Bertsekas, Dimitri P},
  year      = {2022},
  edition   = {3},
  publisher = {Athena Scientific}
}

@book{bertsekas2012dynamic,
  title     = {Dynamic programming and optimal control},
  author    = {Bertsekas, Dimitri},
  volume    = {1},
  year      = {2012},
  publisher = {Athena Scientific}
}

@book{stokey1989recursive,
  title     = {Recursive methods in dynamic economics},
  author    = {Stokey, Nancy L and Lucas,  Robert E},
  year      = {1989},
  publisher = {Harvard University Press}
}

@book{puterman2005markov,
  title     = {Markov decision processes: discrete stochastic dynamic programming},
  author    = {Puterman, Martin L},
  year      = {2005},
  publisher = {Wiley Interscience}
}

@article{kingma2014adam,
  title={Adam: A method for stochastic optimization},
  author={Kingma, Diederik P and Ba, Jimmy},
  journal={arXiv preprint arXiv:1412.6980},
  year={2014}
}

@article{kennedy2008chaotic,
  title={Chaotic equilibria in models with backward dynamics},
  author={Kennedy, Judy A and Stockman, David R},
  journal={Journal of Economic Dynamics and Control},
  volume={32},
  number={3},
  pages={939--955},
  year={2008},
  publisher={Elsevier}
}

@article{gardini2009forward,
  title={Forward and backward dynamics in implicitly defined overlapping generations models},
  author={Gardini, Laura and Hommes, Cars and Tramontana, Fabio and De Vilder, Robin},
  journal={Journal of Economic Behavior \& Organization},
  volume={71},
  number={2},
  pages={110--129},
  year={2009},
  publisher={Elsevier}
}

@article{raines2012fixed,
  title={Fixed points imply chaos for a class of differential inclusions that arise in economic models},
  author={Raines, Brian and Stockman, David},
  journal={Transactions of the American Mathematical Society},
  volume={364},
  number={5},
  pages={2479--2492},
  year={2012}
}

@book{flynn2022macroeconomics,
  title={The macroeconomics of narratives},
  author={Flynn, Joel P and Sastry, Karthik},
  year={2022},
  publisher={SSRN}
}

@techreport{battaglini2021chaos,
  title={Chaos and unpredictability in dynamic social problems},
  author={Battaglini, Marco},
  year={2021},
  institution={National Bureau of Economic Research}
}

@article{deng2022continuous,
  title={Continuous unimodal maps in economic dynamics: On easily verifiable conditions for topological chaos},
  author={Deng, Liuchun and Khan, M Ali and Mitra, Tapan},
  journal={Journal of Economic Theory},
  volume={201},
  pages={105446},
  year={2022},
  publisher={Elsevier}
}

@article{rincon2024existence,
  title={Existence and uniqueness of solutions to the Bellman equation in stochastic dynamic programming},
  author={Rinc{\'o}n-Zapatero, Juan Pablo},
  journal={Theoretical Economics},
  volume={19},
  number={3},
  pages={1223--1260},
  year={2024},
  publisher={Wiley Online Library}
}

@article{ma2022unbounded,
  title={Unbounded dynamic programming via the Q-transform},
  author={Ma, Qingyin and Stachurski, John and Toda, Alexis Akira},
  journal={Journal of Mathematical Economics},
  volume={100},
  pages={102652},
  year={2022},
  publisher={Elsevier}
}

@article{foerster2016perturbation,
  title={Perturbation methods for Markov-switching dynamic stochastic general equilibrium models},
  author={Foerster, Andrew and Rubio-Ram{\'\i}rez, Juan F and Waggoner, Daniel F and Zha, Tao},
  journal={Quantitative economics},
  volume={7},
  number={2},
  pages={637--669},
  year={2016},
  publisher={Wiley Online Library}
}

@article{bayer2020solving,
  title={Solving discrete time heterogeneous agent models with aggregate risk and many idiosyncratic states by perturbation},
  author={Bayer, Christian and Luetticke, Ralph},
  journal={Quantitative Economics},
  volume={11},
  number={4},
  pages={1253--1288},
  year={2020},
  publisher={Wiley Online Library}
}

@article{brumm2017using,
  title={Using adaptive sparse grids to solve high-dimensional dynamic models},
  author={Brumm, Johannes and Scheidegger, Simon},
  journal={Econometrica},
  volume={85},
  number={5},
  pages={1575--1612},
  year={2017},
  publisher={Wiley Online Library}
}

@article{carroll2006method,
  title={The method of endogenous gridpoints for solving dynamic stochastic optimization problems},
  author={Carroll, Christopher D},
  journal={Economics letters},
  volume={91},
  number={3},
  pages={312--320},
  year={2006},
  publisher={Elsevier}
}

@article{barillas2007generalization,
  title={A generalization of the endogenous grid method},
  author={Barillas, Francisco and Fern{\'a}ndez-Villaverde, Jes{\'u}s},
  journal={Journal of Economic Dynamics and Control},
  volume={31},
  number={8},
  pages={2698--2712},
  year={2007},
  publisher={Elsevier}
}

\end{document}